\newcommand\definesymb[1]{%
\expandafter\newcommand\csname #1#1\endcsname{{\ensuremath{\mathbb{#1}}}}%
}
\newtheorem{theorem}{Theorem}
\newtheorem{defn}{Definition}[section]
\newtheorem{proposition}[defn]{Proposition}
\newtheorem{lemma}[defn]{Lemma}
\newtheorem{fact}[defn]{Fact}
\newtheorem{cor}[defn]{Corollary}
\newtheorem{conj}{Conjecture}
\title{Translation-like Actions and Aperiodic Subshifts on Groups}
\author{Emmanuel Jeandel\\
LORIA, UMR 7503 - Campus Scientifique, BP 239\\
54\,506 VANDOEUVRE-L\`ES-NANCY, FRANCE\\
\texttt{emmanuel.jeandel@loria.fr}}
\begin{document}

\maketitle

\begin{abstract}

It is well known that if $G$ admits a f.g. subgroup $H$ with a weakly
aperiodic SFT (resp.  an undecidable domino problem), then $G$
itself has a weakly aperiodic SFT (resp. an undecidable domino problem).
We prove that we can replace the property ``$H$ is a subgroup of $G$''
by  ``$H$ acts translation-like on $G$'', provided $H$ is finitely presented.

In particular:
\begin{itemize}
	\item If $G_1$ and $G_2$ are f.g. infinite groups, then $G_1
	  \times G_2$ has a weakly aperiodic SFT (and actually a
	  undecidable domino problem). In particular the Grigorchuk group
	  has an undecidable domino problem.	  
	\item Every infinite f.g. $p$-group admits a weakly
	  aperiodic SFT.
\end{itemize}
\end{abstract}

A subshift of finite type over a group $G$ corresponds to a
description of  colorings of the vertices of its Cayley graph subject
to local constraints.
Even for harmless groups like $\mathbb{Z}^2$, it is possible to build
\cite{Berger2} easily \cite{Kari14} subshifts with no periodic points.
For this group, the domino problem, which consists in deciding if a
subshift of finite type is empty, is even undecidable \cite{Berger2}.

In this article, we are interested in which groups enjoy similar
properties: In which groups can we build aperiodic subshifts of finite
type, and which groups have an undecidable word problem.
There are various definitions of ``aperiodicity'' on a group, and here
we study weakly aperiodic subshifts: no coloring has a finite orbit.

Apart from the example of $\mathbb{Z}^2$, aperiodic subshifts have
been built on  Baumslag-Solitar Groups \cite{AuKa}, on the free
group \cite{Piantadosi} and on every group of nonlinear polynomial growth
\cite{BallierStein,Carroll}.
In all these examples except the free group, this actually gives
groups with an undecidable domino problem.

It is easy to see that if a f.g. group $G$ has an aperiodic SFT, then
every group that contains $G$ also has an aperiodic SFT.
In this statement, ``contains'' means subgroup containment. The goal
of this article is to prove that this is also true in a stronger
sense. We say that $H$ acts translation-like on $G$ if, upto the
deletion of some edges, some Cayley graph of $G$ can be partitioned
into copies of the Cayley graph of $H$.

We will then prove that if $H$ is finitely presented, acts
translation-like on $G$, and has an aperiodic SFT, then $G$ also has
an aperiodic SFT.
As a corollary, we are able to show that the direct product of two
infinite f.g. groups has an aperiodic SFT, and that any nonamenable
f.g. group has an aperiodic SFT.

The main idea from this article comes from the work of Ballier and
Stein \cite{BallierStein}, which explains (in particular) how to build aperiodic SFTs
on $\mathbb{Z} \times G$ as soon as $G$ is f.g infinite, by producing
copies of $\mathbb{Z}$ inside $G$. We generalize this statement by
searching for copies of other groups inside $G$.
The main idea is that if $G$ contains copies (in the sense of
translation-like actions) of some group $H$, and if $H$ is finitely
presented, then we can find and describe these copies by a finite
number of local constraints.

The idea is also reminiscent of Cohen \cite{Cohen2014} who proved that
having a weakly aperiodic SFT is a quasi-isometry invariant for
finitely presented groups.
Two groups $G$ and $H$ are quasi-isometric if their Cayley graph look
the same from a distance. This notion is somehow related to
translation-like action, but there are differences: a quasi-isometry
cannot distort the distances too much, and a translation-like action
cannot identify two vertices of the Cayley Graph, so that these
notions are ultimately different (The article \cite{Dymarz} possibly
provides an example of quasi-isometric groups $G$ and $H$ s.t. $G$
does not act translation-like on $H$).
The proof method used by Cohen is similar to ours: Cohen encode in a
subshift in $G$ local information that is sufficient to exhibit a
quasiisometry from $G$ to $H$. We do the same with translation-like
actions, which is ultimately much simpler.

\section{Definitions}
We assume some familiarity with group theory and actions of groups.
See \cite{CicCoo} for a good reference on symbolic dynamics on groups.

All groups below are implicitely supposed to be finitely generated
(f.g. for short).

The notion of a Cayley graph is used throughout this article to give
some intuitions, but is not technically needed.
If $G$ is a group with generators $S$, the Cayley graph $C(G;S)$ is
the graph with vertices $G$, and edges $(g,gs)$ for $s \in S$.

The Baumslag-Solitar groups $B(m,n)$ is the group $B(m,n) =
\left<a,b|ab^ma^{-1} = b^n\right>$. When we speak of a
	Baumslag-Solitar group, we implicitely suppose that both $m$ and $n$
	are nonzero (they might be negative).
We will be interested mainly in the groups $B(1,n)$. $B(1,1)$ is in
particular the group $\mathbb{Z}^2$.

\clearpage
\subsection{Symbolic dynamics on groups}

Let $A$ be a finite set and $G$ a group.
We denote by $A^G$ the set of all functions from $G$ to $A$.
For $x \in A^G$, we write $x_g$ instead of $x(g)$ 
for the value of $x$ in $g$.

$G$ acts on $A^G$ by
\[
(g \cdot x)_h = x_{g^{-1}h}
\]

A \emph{pattern} is a partial function $P$ of $G$ to $A$ with finite
support. The support of $P$ will be denoted by $\mathrm{Supp}(P)$.

A \emph{subshift} of $A^G$ is a subset $X$ of $A^G$ which is
topologically closed (for the product topology on $A^G$) and invariant
under the action of $G$.

A subshift can also be defined in terms of forbidden patterns.
If $\cal P$ is a collection of patterns, the subshift defined by $P$ is 

\[
X_{\cal P} = \left\{ x\in A^G | \forall g \in G, \forall P \in {\cal P}
  \exists h \in \mathrm{Supp}(P), (g\cdot x)_h \not= P_h \right\}
\]
Every such set is a subshift, and every subshift can be defined this way.
If $X$ can be defined by a finite set $\cal P$, $X$ is said to be a
subshift of finite type, or for short a SFT.

For a point $x \in X$, the stabilizer of $x$ is $Stab(x) = \{ g | g
  \cdot x =   x\}$
A point $x$ is strongly periodic if $Stab(x)$ is a subgroup of $G$ of
finite index. Equivalently, the orbit $G \cdot x$ of $x$ is finite.

A subshift $X$ is \emph{weakly aperiodic} if it is nonempty and does not contain any
strongly periodic point.
In the remaining, we are interested in groups $G$ which admit 
weakly aperiodic SFTs

A f.g. group $G$ is said to have decidable domino problem if there is an
algorithm that, given a description of a finite set of patterns $\cal
P$, decides if $X_{\cal P}$ is empty.
It is easy to see that f.g. groups with undecidable word problem have
trivially an undecidable domino problem, so this question is mostly
relevant for groups with a decidable word problem.

These two properties are related in the following way: If every
(nonempty) SFT over $G$ has a strongly periodic point (and $G$ has
decidable word problem), then $G$ has decidable domino problem. 

Finding which groups have a weakly aperiodic SFT, and which groups have
decidable domino problems is the topic of this article.

We now summarize previous theorems:
\begin{theorem}
\begin{itemize}
	\item $\mathbb{Z}$ does not admit a weakly aperiodic SFT and has decidable word problem
	\item Free groups of rank $\geq 2$ admit weakly aperiodic SFTs
	  \cite{Piantadosi} and have decidable word problem \cite{Kuske}
	\item The free abelian group $\mathbb{Z}^2$ \cite{BergerPhD}, the Baumslag Solitar groups
	  \cite{AuKa} admit weakly aperiodic SFTs, and have an undecidable word problem
	\item f.g. Nilpotent groups have weakly aperiodic SFTs and an
	  undecidable word problem unless they are virtually cyclic  \cite{BallierStein, Carroll}
  \end{itemize}		
\end{theorem}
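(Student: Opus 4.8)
The plan is to treat the four items separately, since only the first is genuinely self-contained while the others are the cited constructions assembled into a single summary, so the real work lies in the $\mathbb{Z}$ case.

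For the statement about $\mathbb{Z}$, I would argue by recoding. Given any SFT $X \subseteq A^{\mathbb{Z}}$ defined by a finite family ${\cal P}$ of forbidden patterns, let $N$ exceed the diameter of every support $\mathrm{Supp}(P)$. Read $X$ through a sliding window of length $N$: the new alphabet is the set of length-$N$ words containing no forbidden occurrence, and a pair of consecutive windows is declared admissible when they overlap consistently and create no new forbidden pattern. This presents $X$ as a nearest-neighbour (one-step) SFT, i.e. as the set of bi-infinite walks in a \emph{finite} directed graph $\Gamma$ whose vertices are admissible windows. If $X$ is nonempty there is such a bi-infinite walk; since $\Gamma$ is finite, some vertex $v$ occurs at two positions $i<j$ along the walk, and the closed walk from $i$ to $j$ can be repeated bi-infinitely to produce a point $x \in X$ with $\sigma^{k}x = x$ for $k=(j-i)N$, where $\sigma$ is the shift. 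Then $k\mathbb{Z} \subseteq Stab(x)$, so $Stab(x)$ has finite index and $x$ is strongly periodic. Hence no nonempty SFT over $\mathbb{Z}$ avoids strongly periodic points, i.e. $\mathbb{Z}$ has no weakly aperiodic SFT; decidability of the word problem is immediate, as a word in $a,a^{-1}$ is trivial exactly when its exponent sum vanishes.

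For the remaining items I would invoke the cited constructions directly. Free groups of rank $\geq 2$ carry a weakly aperiodic SFT by Piantadosi, and their word problem is decidable because every element has a unique reduced form (Kuske). For $\mathbb{Z}^2$ and the Baumslag--Solitar groups the aperiodic SFTs are those of Berger/Robinson and Aubrun--Kari, while the undecidability comes from encoding halting computations of Turing machines into the local rules. For finitely generated nilpotent groups that are not virtually cyclic one appeals to Ballier--Stein / Carroll, who transport the $\mathbb{Z}^2$ construction along a detected planar grid structure arising from nonlinear growth.

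The only genuine obstacle is in the $\mathbb{Z}$ item: making the higher-block recoding precise so that finiteness of $\Gamma$ (the finite-type character) is preserved, after which the pigeonhole argument is forced; the rest is citation. I would also record one caveat: the undecidability in the last two items is most naturally read as referring to the \emph{domino} problem rather than the word problem, consistent with the abstract, since $\mathbb{Z}^2$ and nilpotent groups have decidable word problems.
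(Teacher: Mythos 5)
Your proposal is correct. The paper itself offers no proof of this theorem --- it is a summary of prior results given by citation only --- and your higher-block recoding plus pigeonhole argument for the $\mathbb{Z}$ item is the standard, correct way to supply the one genuinely self-contained claim, with the remaining items rightly deferred to the cited works. Your caveat is also well taken: the occurrences of ``undecidable word problem'' in the last two items are evidently slips for ``undecidable domino problem,'' since $\mathbb{Z}^2$ and f.g.\ nilpotent groups have decidable word problems.
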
	
We note in passing that these questions may be asked more generally
for tilings of translation surfaces rather than coloring of groups, for example of
$\mathbb{R}^2$, the hyperbolic plane \cite{Kari99}, amenable spaces
\cite{Block}, or manifolds of intermediate growth \cite{marcin}.
How these results translate into aperiodic SFTs on groups is not
clear. We will provide group versions of the two last theorems in this
paper.

There are a few general statements on these questions which work as
follow:
\begin{theorem}
\begin{itemize}	
	\item Let $G,H$ be f. g. commensurable groups. Then $G$ admits a
	  weakly aperiodic SFT (resp. has an undecidable domino problem)
	  if only if $H$ does. \cite{Carroll}
	\item Let $H$ be a f.g. normal subgroup of $G$ f.g. If $G/H$
	  admits a weakly aperiodic SFT (resp. has an undecidable domino
	  problem), then $G$ does.
	\item Let $G,H$ be \emph{finitely presented} groups that are
	  quasi-isometric. Then $G$ admits a
	  weakly aperiodic SFT (resp. has an undecidable domino problem)
	  if and only if $H$ does. \cite{Cohen2014}
	\item Let $G,H$ be two finitely presented groups that are
	  quasi-isometric but not commensurable. Then $G$ and $H$ admit
	  weakly aperiodic SFT \cite{Cohen2014}
\end{itemize}	  
\end{theorem}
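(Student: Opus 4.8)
The four items are collected from the literature (items one, three and four from \cite{Carroll,Cohen2014}), so my plan is to give direct arguments for the two elementary transfers --- the commensurability and the quotient statements --- and merely to recall the mechanism behind Cohen's quasi-isometry results, which are by far the hardest.

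For the commensurability statement I would reduce to the case of a finite-index subgroup. Two f.g. groups are commensurable precisely when they contain isomorphic finite-index subgroups, and ``admitting a weakly aperiodic SFT'' is plainly invariant under isomorphism, so it suffices to show that a f.g. group $G$ and any $K \leq G$ of finite index $n$ share the property. The stabiliser transfer is clean and exact: for a fixed configuration $x$ one has $\mathrm{Stab}_K(x) = \mathrm{Stab}_G(x) \cap K$, and since $[G:K]=n<\infty$, the subgroup $\mathrm{Stab}_G(x)$ has finite index in $G$ if and only if $\mathrm{Stab}_K(x)$ has finite index in $K$; thus the $G$-orbit of $x$ is finite iff its $K$-orbit is. It then remains to intertwine the two settings by the standard recoding $\Phi : A^G \to (A^{\,n})^{K}$ obtained from a choice of representatives $g_1,\dots,g_n$ of the cosets $Kg_i$, namely $\Phi(x)(k)_i = x_{kg_i}$. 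This $\Phi$ is a $K$-equivariant bijection for the $K$-action on the source got by restricting the $G$-action, so finite orbits correspond to finite orbits by the observation above, and it sends SFTs to SFTs because $K$ reaches every point of $G$ within bounded distance. Running it in both directions, and noting that it is computable, gives the equivalence for both the aperiodicity and the domino-problem versions.

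For the quotient statement I would pull the SFT back along $\pi : G \to G/H$. Given an SFT $X$ on $G/H$, let $\tilde X \subseteq A^G$ consist of the configurations $x$ that are constant on every coset $gH$ and whose induced colouring $y$ of $G/H$ lies in $X$. Constancy is imposed by the finitely many local rules $x_g = x_{gh}$, one for each generator $h$ of $H$ (this is exactly where finite generation of $H$ is used), and the forbidden patterns of $X$, lifted through a choice of representatives, become forbidden patterns of $\tilde X$; hence $\tilde X$ is an SFT and is nonempty iff $X$ is. The point is the identity $\mathrm{Stab}_G(x) = \pi^{-1}(\mathrm{Stab}_{G/H}(y))$, which by the correspondence theorem gives $[G : \mathrm{Stab}_G(x)] = [G/H : \mathrm{Stab}_{G/H}(y)]$, so $x$ is strongly periodic exactly when $y$ is. Therefore $\tilde X$ is weakly aperiodic iff $X$ is, and since the construction is effective it also transports undecidability of the domino problem.

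The quasi-isometry statements are the genuinely difficult part and are precisely Cohen's theorem \cite{Cohen2014}, which I would cite rather than reprove; I only recall the idea. One encodes in a subshift on $G$ enough local combinatorial data to reconstruct a quasi-isometry $G \to H$ together with a configuration of a weakly aperiodic SFT on $H$ transported back along it. The main obstacle, and the reason finite presentation of both groups is indispensable, is that a quasi-isometry may distort distances, so the global coherence of the encoded map must be certified from purely local constraints; finite presentation is what lets this coherence be captured by finitely many forbidden patterns. The non-commensurable separation of item four is then read off from the freedom in the construction. Among all four items, item three is where the real work lies.
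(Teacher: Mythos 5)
Your arguments are correct, but be aware that the paper itself offers no proof of this theorem: it is presented explicitly as a summary of known results, with the commensurability item attributed to \cite{Carroll}, the quasi-isometry items to \cite{Cohen2014}, and only a remark afterwards that the domino-problem versions ``may be obtained in the same manner''. So there is no in-paper proof to compare against, and what you have done is supply the standard arguments from the cited literature. Your two elementary sketches are sound: the coset recoding $\Phi(x)(k)_i = x_{kg_i}$, together with the observation that $\mathrm{Stab}_K(x)=\mathrm{Stab}_G(x)\cap K$ has finite index in $K$ iff $\mathrm{Stab}_G(x)$ has finite index in $G$, is exactly the commensurability-invariance argument of Carroll and Penland; and the pull-back of an SFT on $G/H$ to coset-constant configurations on $G$ --- with finite generation of $H$ used to enforce constancy via the finitely many rules $x_g=x_{gh_i}$, and the identity $\mathrm{Stab}_G(x)=\pi^{-1}(\mathrm{Stab}_{G/H}(y))$ doing the periodicity transfer --- is their quotient argument. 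One point worth tightening in the first item: the bijection $\Phi$ cleanly transports an SFT on $G$ down to one on $K$, but for the reverse direction an arbitrary SFT on $K$ over an alphabet $B$ need not lie in the image of $\Phi$, so ``running it in both directions'' glosses over a step; the cleanest fix is to invoke, for that direction, the subgroup-containment proposition that the paper proves just below this theorem (or to pad the alphabet). Deferring items three and four entirely to \cite{Cohen2014} is exactly what the paper does, and your one-line summary of why finite presentation is needed there is accurate.
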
	
The relevant articles above \cite{Carroll,Cohen2014} only deal with
weakly aperiodic SFTs (or strongly aperiodic SFTs) but the results
about undecidable domino problems may be obtained in the same manner.

We conclude this review by the following easy result:
\begin{proposition}
	Let $H \subseteq G$ be f.g. groups. 
	If $H$ has a weakly aperiodic SFT (resp. an undecidable domino
	problem), then $G$ does.
\end{proposition}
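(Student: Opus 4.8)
The plan is to lift a weakly aperiodic SFT on $H$ to one on $G$ by imposing the $H$-constraints along every left coset of $H$ in $G$. Fix a finite generating set $T$ of $H$; since $H \subseteq G$, each element of $T$ is an element of $G$, and a finite set $\mathcal{P}$ of forbidden patterns defining $X \subseteq A^H$ has supports that are finite subsets of $H$, hence of $G$. I would define the SFT $Y \subseteq A^G$ by the very same finite family $\mathcal{P}$, now reading each support as a subset of $G$. This is a SFT over $G$ because $\mathcal{P}$ is finite.

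The key identity to establish first is that $Y = \{ y \in A^G : \forall g \in G,\ (g \cdot y)|_H \in X \}$. Unwinding the action convention $(g\cdot y)_t = y_{g^{-1}t}$ and using that $X$ is $H$-invariant, for $t\in H$ one computes $\big(h\cdot((g\cdot y)|_H)\big)_t = ((hg)\cdot y)_t$, so requiring $(g\cdot y)|_H\in X$ for all $g$ is the same as forbidding every $P\in\mathcal{P}$ at every $k\in G$. Intuitively this says exactly that the restriction of $y$ to each left coset $gH$, read as a configuration on $H$, lies in $X$.

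Nonemptiness then follows by picking any $x\in X$ (nonempty since $X$ is weakly aperiodic) and a transversal $L$ of $G/H$, and setting $y_{lh} = x_h$ for $l\in L$, $h\in H$; writing $g^{-1}=lh_0$ one checks $(g\cdot y)|_H = h_0^{-1}\cdot x \in X$, so $y\in Y$. For weak aperiodicity, suppose $y\in Y$ is strongly periodic, so $\mathrm{Stab}_G(y)$ has finite index in $G$. Put $x=y|_H\in X$. Then $\mathrm{Stab}_G(y)\cap H$ has finite index in $H$, and for $h$ in this intersection one gets $(h\cdot y)|_H = h\cdot x$ on the one hand and $(h\cdot y)|_H = y|_H = x$ on the other (as $h\cdot y = y$), so $h\in\mathrm{Stab}_H(x)$; hence $x$ is strongly periodic in $X$, contradicting weak aperiodicity. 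Thus $Y$ is weakly aperiodic.

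For the domino problem, the same construction is a reduction: $X$ is empty if and only if $Y$ is empty, since the transversal construction gives $X\neq\emptyset\Rightarrow Y\neq\emptyset$, while $y\in Y\Rightarrow y|_H\in X$ gives the converse. This reduction is computable once we fix, once and for all, a word in the generators of $G$ for each generator of $H$, which lets us rewrite the supports of $\mathcal{P}$ over the generating set of $G$. The only care needed is the bookkeeping with the action convention and the left/right coset sides, and the use of $H$-invariance of $X$; there is no genuine obstacle, which is why the statement is \emph{easy}.
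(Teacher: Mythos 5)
Your proof is correct and follows essentially the same route as the paper: lift the same finite family of forbidden patterns to $G$, use a coset transversal to transport a configuration from $H$ to $G$ for nonemptiness, and restrict to $H$ together with the finite-index subgroup $\mathrm{Stab}_G(y)\cap H$ for weak aperiodicity. Your write-up is in fact a bit more careful than the paper's about the left/right coset bookkeeping under the action convention $(g\cdot y)_t=y_{g^{-1}t}$.
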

\begin{proof}
	Let $X_H$ be a SFT on $H$.
\[
X_H = \left\{ x\in A^H | \forall h \in H, \forall P \in {\cal P}   \exists p \in \mathrm{Supp}(P), (h\cdot x)_p \not= P_p \right\} \]

	Taking the same forbidden patterns, we obtain a subshift $X_G$ on $G$.
	
\[
X_G = \left\{ x\in A^G | \forall g \in G, \forall P \in {\cal P}   \exists p \in \mathrm{Supp}(P), (g\cdot x)_p \not= P_p \right\} \]

If $X_G$ is nonempty, then $X_H$ is nonempty: The restriction of $x \in X_G$ to $A^H$ is in $X_H$.
If $X_H$ is nonempty, then $X_G$ is nonempty: Take $x \in X_G$, writee $G = HK$ for some
transversal $K$ and define $y_{hk} = x_h$.

This proves that $G$ has an undecidable domino problem if $H$ does.

Now if $X_G$ has a point $x$ with stabilizer $N$ of finite index in $G$, then the
restriction of $x$ to $A^H$ is a point of $X_H$ with stabilizer that
contains $H \cap N$, hence is of finite index in $H$.
Hence $G$ has a weakly aperiodic SFT if $H$ does.
\end{proof}
From all previous theorems, virtually cyclic groups are the only
groups for which we are able to prove that they don't have aperiodic
SFT, and virtually free groups the only groups for which we are able
to prove that they have decidable domino problem.
The conjectures below state these are the only cases:

\begin{conj}[\cite{BallierStein}]
A f.g. group $G$ has a decidable domino problem iff it is virtually free.   
\end{conj}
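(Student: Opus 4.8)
The plan is to separate the two implications, since the forward one is essentially known and the reverse one carries all the weight. For the ``if'' direction one must show that a virtually free group has a decidable domino problem. I would first treat a free group $F_k$ directly: its Cayley graph is a tree, so admissibility of a finite set of patterns propagates along the tree and emptiness of $X_{\cal P}$ becomes a non-emptiness question for a finite tree automaton, which is decidable (this is the tree-automaton argument underlying Muller--Schupp style results). I would then pass from $F_k$ to an arbitrary virtually free group $G$ using commensurability invariance (the first item of the general theorem above): $G$ contains a finite-index free subgroup and is therefore commensurable with a free group, so it inherits decidability. This direction is routine and is not where I expect trouble.

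The ``only if'' direction is the real content, and here I would route everything through the main theorem of this paper. It suffices to prove that every f.g. group $G$ that is \emph{not} virtually free has an undecidable domino problem, and by the main theorem this follows once one produces a finitely presented group $H$ with an undecidable domino problem that acts translation-like on $G$. The cleanest candidate is $H=\mathbb{Z}^2=B(1,1)$, which is finitely presented and has an undecidable domino problem by Berger. The whole conjecture would then reduce to a single geometric statement:
\begin{quote}
\emph{Every finitely generated group that is not virtually free admits a translation-like action by $\mathbb{Z}^2$.}
\end{quote}

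I expect this geometric reduction to be the main obstacle, and I do not expect it to be accessible by elementary means; it is plausibly open. The guiding intuition is sound --- virtually free groups are exactly those acting on a tree with finite stabilizers (Stallings--Dunwoody), so a non-virtually-free group ought to carry enough two-dimensional large-scale structure to absorb a grid --- but converting this into an actual translation-like $\mathbb{Z}^2$ action is delicate. Note in particular that Seward's theorem, which supplies a translation-like action of $F_2$ on every nonamenable group, does \emph{not} help here: $F_2$ has a \emph{decidable} domino problem (indeed $\mathbb{Z}^2$ cannot act translation-like on $F_2$, since otherwise the main theorem would make $F_2$'s domino problem undecidable), so for the domino problem one genuinely needs a two-dimensional model group such as $\mathbb{Z}^2$ or $F_2\times\mathbb{Z}$ rather than a free group. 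The amenable non-virtually-free case (for instance the Grigorchuk group and other groups of intermediate growth) looks at least as hard, and exotic examples with severe subgroup or torsion constraints, such as Tarski monsters, are exactly where I would expect the translation-like $\mathbb{Z}^2$ question, and hence the conjecture itself, to resist proof.
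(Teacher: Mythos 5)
The statement you were asked to prove is stated in the paper as a \emph{conjecture} (attributed to Ballier and Stein): the paper offers no proof of it, explicitly lists virtually free groups as the only class for which decidability of the domino problem is currently known, and even proposes the lamplighter group as a potential counterexample to all three of its conjectures. So there is no proof in the paper to compare yours against, and --- as you candidly acknowledge --- your proposal is not a proof either. The ``if'' direction you sketch (virtually free $\Rightarrow$ decidable domino problem, via tree automata on a free group and then commensurability invariance) is the standard known argument and is consistent with what the paper takes as given; that part is fine but carries no new content.

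The gap is exactly where you locate it: the ``only if'' direction. Your reduction --- find a finitely presented group $H$ with undecidable domino problem, such as $\mathbb{Z}^2$, acting translation-like on every f.g. group that is not virtually free, and then invoke the paper's main theorem --- is precisely the programme the paper itself advocates (compare its final conjecture, which asks for a translation-like action of a nontrivial one-relator group on every non-virtually-cyclic group in order to attack the weak aperiodicity conjecture). But the geometric statement you reduce to is itself unproven, so the argument is circular in the sense of trading one open problem for another; the paper only establishes instances of it (e.g.\ $\mathbb{Z}^2$ acts translation-like on the Grigorchuk group, and on any direct product of two infinite f.g.\ groups), not the general case. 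Your sanity check that $\mathbb{Z}^2$ cannot act translation-like on $\mathbb{F}_2$ (since that would contradict decidability on free groups via the main theorem) is correct and usefully explains why Seward's and Whyte's theorems do not suffice here; and you are right that amenable groups of intermediate growth and groups with pathological subgroup structure are where both your reduction and the conjecture itself remain genuinely open.
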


\begin{conj}[\cite{Carroll}]
A f.g. group $G$ has no weakly aperiodic SFT iff it is virtually cyclic.
\end{conj}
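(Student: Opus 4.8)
Because the statement is an equivalence I would treat its two directions separately, and I should be candid that only one of them is within reach of the tools assembled here: the reverse implication is routine, whereas the forward implication is the genuinely open content of the conjecture. My plan is therefore to prove the easy direction outright and to reduce the hard direction, via the transfer theorem of this paper, to a purely geometric question.

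For the reverse direction I would show that a virtually cyclic $G$ admits no weakly aperiodic SFT. Such a $G$ is either finite or contains $\mathbb{Z}$ with finite index. If $G$ is finite then $A^G$ is finite, so every configuration has a finite orbit and no nonempty SFT can be weakly aperiodic. If $G$ contains $\mathbb{Z}$ with finite index then $G$ and $\mathbb{Z}$ share a common finite-index subgroup isomorphic to $\mathbb{Z}$, hence are commensurable; by the commensurability invariance recalled above, $G$ has a weakly aperiodic SFT if and only if $\mathbb{Z}$ does, and $\mathbb{Z}$ does not. This settles the reverse implication.

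For the forward direction---if $G$ is not virtually cyclic then it has a weakly aperiodic SFT---my plan is to exhibit a finitely presented group $H$ that acts translation-like on $G$ and itself carries a weakly aperiodic SFT, and then invoke the main transfer theorem. Two seed groups are natural. In the nonamenable case, Whyte's solution of the geometric von Neumann problem furnishes a translation-like action of the free group $F_2$ on $G$; since $F_2$ is finitely presented and admits a weakly aperiodic SFT \cite{Piantadosi}, the transfer theorem delivers one on $G$. In the amenable, non-virtually-cyclic case the natural seed is $H=\mathbb{Z}^2$, which is finitely presented and carries the classical weakly aperiodic SFT \cite{BergerPhD}, so it would suffice to produce a translation-like $\mathbb{Z}^2$-action on $G$.

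The hard part will be exactly this last step. Seward's theorem gives a translation-like $\mathbb{Z}$-action on every infinite f.g.\ group, but a single bi-infinite line is useless here, since $\mathbb{Z}$ has no weakly aperiodic SFT; one genuinely needs a two-dimensional seed. Whether every amenable (equivalently, every non-virtually-cyclic) finitely generated group admits a translation-like action of $\mathbb{Z}^2$ is itself an open problem in geometric group theory, and this is where the argument stalls. So the realistic outcome is conditional: the transfer theorem reduces Carroll's conjecture \cite{Carroll} to the statement that every non-virtually-cyclic f.g.\ group admits a translation-like action by some finitely presented group with a weakly aperiodic SFT, with the amenable non-virtually-cyclic case---and in particular the existence of translation-like $\mathbb{Z}^2$-actions---being the crux that remains.
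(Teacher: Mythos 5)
The statement you were asked to prove is one of the paper's explicit \emph{conjectures} (attributed to \cite{Carroll}); the paper contains no proof of it, and indeed even names the lamplighter group as a potential counterexample. Your reading of the situation is therefore exactly right, and the two things you actually establish are correct. The backward implication is fine: a finite $G$ has only finite orbits, and an infinite virtually cyclic $G$ contains $\mathbb{Z}$ with finite index, hence is commensurable with $\mathbb{Z}$, so the commensurability invariance of \cite{Carroll} (recalled in the paper's second theorem) together with the fact that $\mathbb{Z}$ admits no weakly aperiodic SFT settles it. Your nonamenable branch of the forward implication is also literally the paper's own corollary: Whyte's translation-like $\mathbb{F}_2$-action \cite{Whyte} plus Piantadosi's SFT \cite{Piantadosi} fed into the transfer theorem.

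Where your reduction differs from the paper's own program is the choice of seed in the amenable case. You insist on $\mathbb{Z}^2$, which makes the residual geometric problem harder than necessary. The paper's closing conjecture asks only that some nontrivial \emph{one-relator} group act translation-like on $G$, and observes that the amenable nontrivial one-relator groups are exactly the Baumslag--Solitar groups; since all $B(m,n)$ are finitely presented and carry weakly aperiodic SFTs by \cite{AuKa}, any translation-like action of any $B(1,n)$ suffices for the transfer theorem, with $\mathbb{Z}^2 = B(1,1)$ being just one admissible seed. The paper even proves an instance of this flexibility: if $G$ admits $\mathbb{Z}$ as a normal subgroup and is not virtually cyclic, some $B(1,n)$ acts translation-like on $G$ --- and in the twisted case $n \neq \pm 1$ no $\mathbb{Z}^2$ is produced, only $B(1,n)$. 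So you should weaken your reduced statement from ``every amenable non-virtually-cyclic f.g.\ group admits a translation-like $\mathbb{Z}^2$-action'' to ``\ldots a translation-like action of some Baumslag--Solitar group''; this is the paper's actual reduction, it strictly subsumes yours, and (per the lamplighter caveat) even it remains open rather than merely unproved here. With that adjustment your proposal coincides with the paper's intended route: prove the easy direction, and reduce the hard one through the transfer theorem to a geometric statement about translation-like actions, in the spirit of \cite{Seward} and \cite{BallierStein}.
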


\subsection{Translation-like actions}

The concept of translation-like action was introduced by Whyte \cite{Whyte}.
We will use here an alternative definition which is a compromise
between the original definition from Whyte and Cor~5.2 in Seward \cite{Seward}.

\begin{defn}
Let $H$ and $G$ be f.g groups. Let $S_H$ be a generating set for $H$.

We say that $H$ acts translation-like on $G$ if $H$ right acts on $G$ s.t:
\begin{itemize}
  \item The action is free: $g \circ h = g$ for some $g$ implies $h =\lambda_H$.
  \item There exists a finite set $S_G$ and a map $\phi: G \times S_H
\rightarrow S_G$ s.t. ${g \circ h = g \phi(g,h)}$.
\end{itemize}
Said otherwise, there exists a finite set $S_G$ and a partial
labelling of the edges of the Cayley graph $C(G,S_G)$ by elements of $S_H$
s.t. the restrictions of $C(G,S_G)$ to the labelled edges is the
disjoint union of copies of the Cayley graph of $C(H,S_H)$.
\end{defn}
Note that the definition does not depend on the choice $S_H$ of a
generating set for $H$.
\iffalse
Fig.~\ref{fig:TODO} contains three examples of a translation-like action
of $\mathbb{Z}$ on $\mathbb{Z}^2$. The third one is transitive.
\fi 

As hinted in \cite{Whyte,Seward}, translation-like actions generalize
(for f.g. groups) subgroup containment: If $H$ is a subgroup of $G$, 
then $H$ acts translation-like on $G$. More generally:
\begin{lemma}
	Let $G,H,N$ be f.g. groups.
	
	If $H$ acts translation-like on $N$ and $N$ acts translation-like
	on $G$,
	then $H$ acts translation-like on $G$.
\end{lemma}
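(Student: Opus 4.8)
The plan is to realize the geometric intuition behind the statement: the Cayley graph of $G$ splits into disjoint copies of the Cayley graph of $N$ (via the second action), and each such copy splits in turn into disjoint copies of the Cayley graph of $H$ (via the first action), so that altogether $G$ is partitioned into copies of $H$. I denote by $\circ_1$ the translation-like right action of $H$ on $N$ and by $\circ_2$ that of $N$ on $G$, with associated finite sets $S_N \subseteq N$ and $S_G$ and maps $\phi_1 : N \times S_H \to S_N$ and $\phi_2 : G \times S_N' \to S_G$, where $S_H$ generates $H$ and $S_N'$ generates $N$. Since $\circ_2$ is free, its orbits partition $G$ and each orbit map $n \mapsto r \circ_2 n$ is a bijection onto its orbit; I fix a set $R$ of orbit representatives, so that every $g \in G$ is written uniquely as $g = r \circ_2 n$ with $r \in R$ and $n \in N$, and I write $\nu(g) = n$ for this $N$-coordinate.

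I then define the candidate action of $H$ on $G$ by $g \circ h = r \circ_2 (n \circ_1 h)$ when $g = r \circ_2 n$. First I would check that $\circ$ is a right action: because $\circ_2$ preserves the representative $r$ and $\circ_1$ is a right action, $(g \circ h) \circ h' = r \circ_2 ((n \circ_1 h) \circ_1 h') = r \circ_2 (n \circ_1 (h h')) = g \circ (h h')$, and $g \circ \lambda_H = g$. Freeness follows from the freeness of the two given actions: if $g \circ h = g$ then $r \circ_2 (n \circ_1 h) = r \circ_2 n$, whence $n \circ_1 h = n$ by injectivity of the orbit map, and then $h = \lambda_H$.

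The heart of the argument is to verify the bounded-displacement condition and produce the required finite set and map. Using $n \circ_1 h = n\,\phi_1(n,h)$ in $N$ together with the right-action identity $r \circ_2 (n\,s) = (r \circ_2 n) \circ_2 s$, I obtain the clean formula $g \circ h = g \circ_2 \phi_1(\nu(g), h)$, that is, each generator of $H$ moves $g$ by exactly one $\circ_2$-step along the element $\phi_1(\nu(g),h)$ of the finite set $S_N$. The one genuine obstacle is that $S_N$ is merely a finite subset of $N$ and need not be the generating set $S_N'$ used to witness that $N$ acts translation-like on $G$. I resolve this with the remark recorded just after the definition, that a translation-like action does not depend on the chosen generating set: replacing $S_N'$ by the finite generating set $S_N' \cup S_N$ of $N$, I may assume $S_N \subseteq S_N'$, so that $\phi_2$ is defined on all of $S_N$. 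Setting $\phi(g,h) = \phi_2(g, \phi_1(\nu(g), h)) \in S_G$ then yields $g \circ h = g\,\phi(g,h)$ with $\phi : G \times S_H \to S_G$ and $S_G$ finite, which is exactly the data demanded by the definition. Alternatively, one can avoid altering $S_N'$ by observing directly that for each fixed $s \in N$ the displacement $g^{-1}(g \circ_2 s)$ takes only finitely many values as $g$ ranges over $G$, since expressing $s$ as a word in $S_N'$ bounds this displacement by a fixed multiple of the word length. Either way, the construction is complete and establishes that $H$ acts translation-like on $G$.
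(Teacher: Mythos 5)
Your proof is correct and follows essentially the same route as the paper's: fix a transversal for the free action of $N$ on $G$, define the composite action orbitwise by $(r \circ_2 n) \star h = r \circ_2 (n \circ_1 h)$, and observe that each $H$-generator moves a point by one $\circ_2$-step along an element of the finite set $S_N$. Your explicit handling of the mismatch between the two finite sets attached to $N$ (invoking the independence of the definition from the choice of generating set) is exactly the point the paper dispatches with its parenthetical ``recall we can take the same set $S_N$'', and your additional verification of freeness is a minor but welcome completion of a step the paper leaves implicit.
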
	
\begin{proof}	
This is obvious from the definition in terms of Cayley graphs.

For an actual proof, let {\scriptsize $\square$} be the action of $H$
on $N$ (witnessed by the sets $S_H$ and $S_N$) and
 $\circ$ be the action of $N$ on $G$ (witnessed by the sets $S_N$ and
 $S_G$, recall we can take the same set $S_N$)

Write $G = K \circ N$ for some transversal $K$ (which exists by
freeness of the action) and define the action by
$(k\circ i) \star h = k \circ (i${\scriptsize $ \square$}$ h)$.

It is indeed an action.
Furthermore, let $h \in S_H$ and $g \in G$. Write $g = k \circ i$.
$(k \circ i) \star h = k \circ (i${\scriptsize $ \square$}$ h) = k
\circ (in)$ for some $n \in S_N$ as {\scriptsize $\square$} is translation-like,
and $k \circ (in)= (k \circ i) g'$ for some $g' \in S_G$ as $\circ$ is
translation-like. Therefore $g \star h = gg'$ for some $g' \in S_G$.
\end{proof}

\clearpage
\section{The construction}
Before going into the proof, we start with a warmup.

Let $G$ be an f.g. infinite group and consider its Cayley graph $C(G;S)$
with respect to some finite set of generator $S$.
Now obtain a subgraph of $C(G;S)$ by keeping only one outgoing edge and
one incoming edge for each vertex.
This subgraph $\cal G$ is therefore an union of biinfinite paths and circuits.
Furthermore, it is always possible (if $S$ has been chosen carefully)
for $\cal G$ to consist only of biinfinite paths (This is nontrivial
and comes from the fact that $\mathbb{Z}$ acts translation-like on any
f.g. infinite group).
In other words, $\cal G$ is the union of copies of $\mathbb{Z}$.
The choice of an incoming and an outgoing edge can be simulated easily
by a subshift of finite type, so that we have proven that for any f.g. 
infinite group $G$, there is a SFT $S$ where every element of $S$
somehow partitions the Cayley graph of $G$ into copies of the Cayley
graph of $\mathbb{Z}$ or of  quotients of $\mathbb{Z}$, and some
element of $S$ partitions the Cayley graph of $G$ into copies of
the Cayley graph of $\mathbb{Z}$ only.

This construction is already sufficient to prove that $G_1 \times G_2$
admits an aperiodic SFT whenever $G_1$ and $G_2$ are f.g. infinite.
To do better, we could try to embed something other than $\mathbb{Z}$,
for example $\mathbb{Z}^2$, or the free group $\mathbb{F}_2$.

The general idea is as follows: Let $H$ be a f.g. group. We want to build
a SFT $X$ over a group $G$ s.t.
\begin{enumerate}
	\item Each element of $X$ somehow partitions the Cayley graph of
	  $G$ into copies of the Cayley graphs of $H$ or of quotients of $H$.
	\item There exist an element of $X$ for which the partition
	  correspond to copies of the Cayley graph of $H$ itself.
\end{enumerate}	

The good notion to make the second point work is the concept of
translation-like action.
With a subshift of finite type, we can only test properties of
the Cayley graph in a finite neighborhood of every point, this is the
reason why we will require $H$ to be finitely presented.

\subsection{Some Definitions} 

We start with some notations. Let $H$ be a finitely presented group
that we see as a finitely presented \emph{monoid}
 $H = \left< h_1 \dots h_n | R_1 \dots R_p\right>$.
Let $S = \{ h_1 \dots h_n\}$. Let $\cal R$ be the set of relations seen as a subset of $S^\star \times S^\star$.

In the rare circumstances when it will be necessary to differentiate an
element of $S^\star$ from the corresponding element of $H$, we will
write $h$ for an
element of $S^\star$ and $\underline{h}$ for the corresponding element of $H$.

Let $G$ be a f.g. group and $S_G$ be a finite subset of $G$ s.t. $H$
acts translation like on $G$ for this choice of  $S_G$.

\paragraph{}
We denote by $F$ the set of functions from $S$ to $S_G$.
Let $\Sigma$ be a finite alphabet.
In the remaining we are interested in points of $(\Sigma \times F)^G$.
For $x \in (\Sigma \times F)^G$, we write $\sigma(x_g)$ for the
$\Sigma$-component of $x_g$ and $x_g(h_i)$ for the value of the
function in the $F$-component of $x_g$ at  $h_i$.
In other words  $\sigma(x_g)$ is a notation for $(x_g)_1$ and $x_g(h_i)$ a
notation for $(x_g)_2(h_i)$.

A point $x$ should be interpreted on the Cayley graph $C(G,S_G)$ of $G$.
$\sigma(x_g)$ is the color at vertex $g$.
$x_g(h_i)$ is the edge of $C(G,S_G)$ we have to follow from vertex $g$ 
if we want to simulate going into direction $h_i$ in the Cayley graph of $H$.

\subsection{Preliminaries}

For now on, we concentrate on the edges, and we will deal with the
symbols later on.
Let $x \in (\Sigma \times F)^G$.
For $g$ in $G$ and $h \in S^\star$, we denote by $\Phi(g,x,h)$ the
element of $G$ obtained starting from $g$ and following the edges
(given by $x$) corresponding to $h$.

Formally $\Phi$ can be defined by :
\begin{itemize}
\item $\Phi(g,x,\epsilon) = g$ (where $\epsilon$ is the empty word of
  $S^\star$)
\item $\Phi(g,x, h_i h) = \Phi(g x_g(h_i) ,x, h)$
\end{itemize}
Note that, for each element $x$,  this defines an action of $S^\star$
into $G$, that is:
\begin{fact}
	\label{fact:action}
$\Phi(g,x,hh') = \Phi(\Phi(g,x,h),x, h')$.
\end{fact}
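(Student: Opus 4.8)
The plan is to prove the identity by induction on the length of the first word $h \in S^\star$, while keeping the starting vertex $g$ and the second word $h'$ \emph{universally quantified}. This universal quantification is essential: the recursive clause defining $\Phi$ changes the starting point, replacing $g$ by $g\,x_g(h_i)$, so the induction hypothesis must be available at every vertex of $G$, not merely at $g$.

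For the base case $h = \epsilon$, both sides reduce to $\Phi(g,x,h')$: the left-hand side because $\epsilon h' = h'$, and the right-hand side because $\Phi(g,x,\epsilon) = g$ by the first defining clause. For the inductive step I would write $h = h_i h''$ with $h_i \in S$ and $h'' \in S^\star$ strictly shorter, and assume the claim holds for $h''$ at every starting vertex. Applying the second defining clause to the concatenated word gives
\[
\Phi(g,x,hh') = \Phi(g,x,h_i h'' h') = \Phi\bigl(g\,x_g(h_i),\, x,\, h'' h'\bigr).
\]
I would then invoke the induction hypothesis at the vertex $g\,x_g(h_i)$ with the words $h''$ and $h'$, rewriting the right-hand side as $\Phi\bigl(\Phi(g\,x_g(h_i),x,h''),\,x,\,h'\bigr)$. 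A final application of the second defining clause, read backwards, identifies $\Phi(g\,x_g(h_i),x,h'') = \Phi(g,x,h_i h'') = \Phi(g,x,h)$, which produces $\Phi(\Phi(g,x,h),x,h')$ and closes the induction.

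I expect no genuine obstacle: the statement is exactly the assertion that $\Phi(\cdot,x,\cdot)$ is a monoid action of the free monoid $S^\star$ on $G$, and the recursion defining $\Phi$ is tailored so that associativity of word concatenation transfers directly. The only point demanding care is the one noted above, namely phrasing the induction so that the hypothesis is available at the shifted vertex $g\,x_g(h_i)$; this is handled cleanly by quantifying over $g$ from the outset rather than fixing it.
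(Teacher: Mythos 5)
Your induction is correct: quantifying over the starting vertex $g$ so that the hypothesis is available at the shifted vertex $g\,x_g(h_i)$ is exactly the point that makes the argument go through, and the base case and inductive step are both handled properly. The paper states this fact without proof, treating it as immediate from the recursive definition of $\Phi$, and your argument is precisely the routine induction the author is implicitly appealing to.
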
	
Another way to see $\Phi$ is as follows (this can be proven by a
straightword induction on the length of $h$):
\begin{fact}
$	\Phi(g, g \cdot x, h) = g\Phi(\lambda_G, x, h)$.
\end{fact}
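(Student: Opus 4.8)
The plan is to prove the identity by induction on the length of $h \in S^\star$, but the statement as written does not close under induction, so I would first strengthen it. Specifically, I would prove the more general claim that for every $g, k \in G$ and every $h \in S^\star$,
\[
\Phi(gk, g\cdot x, h) = g\,\Phi(k, x, h),
\]
of which the stated Fact is the special case $k = \lambda_G$.

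The base case $h = \epsilon$ is immediate: by the first defining clause of $\Phi$ the left-hand side is $gk$, and the right-hand side is $g\cdot k = gk$.

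For the inductive step I would write $h = h_i h'$ and unfold the recursive clause $\Phi(g', y, h_i h') = \Phi(g'\, y_{g'}(h_i), y, h')$ on both sides. The crucial computation is the edge label read off from the shifted configuration $g\cdot x$ at the vertex $gk$: by the definition of the action, $(g\cdot x)_m = x_{g^{-1}m}$, so with $m = gk$ we get $(g\cdot x)_{gk}(h_i) = x_{g^{-1}gk}(h_i) = x_k(h_i)$. Writing $s = x_k(h_i) \in S_G$, the left-hand side becomes $\Phi\bigl(g(ks), g\cdot x, h'\bigr)$, and applying the induction hypothesis with $k$ replaced by $ks$ rewrites this as $g\,\Phi(ks, x, h')$. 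On the other side, the same recursive clause gives $\Phi(k, x, h_i h') = \Phi(ks, x, h')$, so the right-hand side is also $g\,\Phi(ks, x, h')$, which closes the induction.

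The only genuine obstacle is recognising that the original statement must be generalised before inducting: in the inductive step the base vertex advances from $g$ to $g(ks)$ while the decorated configuration stays $g\cdot x$, so one needs the hypothesis at a shifted vertex rather than at $g$ itself, and carrying the arbitrary offset $k$ is exactly what makes the hypothesis reusable. Everything else is a routine unfolding of the definitions of $\Phi$ and of the $G$-action, the single nontrivial identity being $(g\cdot x)_{gk}(h_i) = x_k(h_i)$.
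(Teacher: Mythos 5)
Your proof is correct and is essentially the paper's own (commented-out) argument: an induction on the length of $h$ whose inductive step requires the identity at a shifted base vertex. The only cosmetic difference is that you carry the offset $k$ explicitly in a strengthened statement, whereas the paper keeps the original statement and obtains the shifted form $\Phi(g_1g_2, g_1\cdot x, h) = g_1\Phi(g_2,x,h)$ by instantiating the induction hypothesis (which is quantified over all configurations) at the translated configuration $g_2^{-1}\cdot x$ --- so the original statement does in fact close under induction, contrary to your preliminary remark, but your version is equally valid.
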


The main object that will interest us in this section is the following
subshift:
\[
	\begin{array}{rcl}
	X &=& \left\{ x \in (\Sigma \times F)^G | \forall g \in G, \forall
	  (h, h') \in {\cal R},
	  \Phi(\lambda_G, g \cdot x, h) = 	  \Phi(\lambda_G, g \cdot x, h')\right\} \\	&=&\left\{ x \in (\Sigma \times F)^G | \forall g \in G, \forall (h,h') \in {\cal R},
	  \Phi(g, x, h) = \Phi(g,x,h')\right\} 
	\end{array}
     \]
The two conditions are equivalent by the second fact.

Note that if we denote $Z = \left\{ x \in (\Sigma \times F)^G |
  \forall (h,h') \in {\cal R},
  \Phi(\lambda_G, x, h) =   \Phi(\lambda_G, x, h') \right\}$, then $X = \{ x |
  \forall g, g\cdot x \in Z\}$, which proves that $X$ is indeed a subshift.
As  ${\cal R}$ and $A$ are finite, whether $x \in Z$ depends only on
the value of $x$ on a finite neighborhood of $\lambda_G$, hence $X$ is
a SFT.

$X$ is always well-defined, whether $H$ acts translation-like on $G$
or not. However $X$ can be empty.

Now, by its definition, $X$ has the following obvious property.
\begin{fact}
Let $h, h' \in S^\star$ s.t. $\underline{h} = \underline{h'}$ (i.e. $h$ and
$h'$ represent the same element in $H$).

Then for $x \in X$, $\phi(g, x, h) = \phi(g, x, h')$.
\end{fact}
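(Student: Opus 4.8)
The plan is to reduce the general equality to the elementary case of a single application of a defining relation, exploiting the fact that $\underline{h}=\underline{h'}$ holds precisely when $h$ and $h'$ are joined by a finite chain of rewrites. First I would recall that, since $H=\langle S\mid {\cal R}\rangle$ as a monoid, the semantic equality $\underline{h}=\underline{h'}$ is equivalent to $h$ and $h'$ lying in the same class of the congruence on $S^\star$ generated by ${\cal R}$. Concretely, this means there is a finite sequence $h=w_0,w_1,\dots,w_k=h'$ in $S^\star$ such that each consecutive pair has the form $w_j=urv$, $w_{j+1}=ur'v$ for some $u,v\in S^\star$ with $(r,r')\in{\cal R}$ or $(r',r)\in{\cal R}$. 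By transitivity of equality it then suffices to prove $\Phi(g,x,w_j)=\Phi(g,x,w_{j+1})$ for each single step, and since the desired conclusion is symmetric in $h,h'$ I may assume without loss of generality that $(r,r')\in{\cal R}$.

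For a single elementary step I would use the action property (Fact~\ref{fact:action}) twice to factor
\[
\Phi(g,x,urv) = \Phi\bigl(\Phi(\Phi(g,x,u),x,r),x,v\bigr).
\]
Setting $g'=\Phi(g,x,u)$ and invoking the defining condition of $X$ at the point $g'$, namely $\Phi(g',x,r)=\Phi(g',x,r')$ which holds because $(r,r')\in{\cal R}$ and $x\in X$, the inner arguments agree. Applying $\Phi(\,\cdot\,,x,v)$ to both sides of this equality and unfolding the factorization in the reverse direction yields $\Phi(g,x,urv)=\Phi(g,x,ur'v)$, which is exactly the single-step claim. Chaining these equalities along $w_0,\dots,w_k$ gives $\Phi(g,x,h)=\Phi(g,x,h')$.

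The argument is essentially bookkeeping built on Fact~\ref{fact:action} and the definition of $X$; the one point that must be handled with care, and the closest thing to an obstacle, is the passage from the semantic equality $\underline{h}=\underline{h'}$ in the monoid $H$ to the syntactic chain of elementary substitutions in $S^\star$. This is precisely the statement that the monoid presented by $\langle S\mid{\cal R}\rangle$ is the quotient of $S^\star$ by the congruence generated by ${\cal R}$, so that two words represent the same element if and only if they are connected by rewrites using the relations. It is worth emphasising that the factorization step relies on the definition of $X$ quantifying over \emph{all} basepoints $g$, since the relation must be applied at the translated point $g'=\Phi(g,x,u)$ rather than at $g$ itself; once this is in place the induction on the length of the rewrite chain is immediate.
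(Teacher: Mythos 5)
Your proof is correct and is precisely the argument the paper leaves implicit: the paper states this Fact without proof (``by its definition, $X$ has the following obvious property''), and the intended justification is exactly your reduction of $\underline{h}=\underline{h'}$ to a chain of elementary rewrites via the congruence generated by ${\cal R}$, combined with Fact~\ref{fact:action} and the fact that the defining condition of $X$ holds at every basepoint, in particular at $g'=\Phi(g,x,u)$. Your write-up correctly identifies the only point needing care, namely the passage from semantic equality in the presented monoid to a syntactic rewrite chain in $S^\star$.
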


\paragraph{}

In the remainder of this section, we introduce the key lemmas, which
explain how  any element of $X$ can be seen as an element of
$\Sigma^H$, and any element of $\Sigma^H$ can be seen as an element of
$X$ with a specific property.

\begin{defn}
Let $F: X \rightarrow \Sigma^H$ defined by 
$F(x)_{\underline{h}} = \sigma(x_{\Phi(\lambda, x, h)})$
\end{defn}
That is $F(x)_{\underline{h}}$ is the symbol we obtain starting from $\lambda_G$ and
following the direction given by $h$. As $x \in X$, this is well defined.

\begin{lemma}
	\label{lemme:sens1}	
	Let $x \in X$. Then for every $h \in H$, there exists
	$g \in G$ s.t. $h \cdot F(x) = F(g \cdot x)$.
\end{lemma}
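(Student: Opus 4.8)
\section*{Proof sketch (proposal)}

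The plan is to write down an explicit $g$ and verify the identity $\underline{h}\cdot F(x) = F(g\cdot x)$ coordinate by coordinate in $\Sigma^H$. The first thing I would record is that, for a fixed $x\in X$, the map $(g,h)\mapsto \Phi(g,x,h)$ descends to a genuine right action of the \emph{group} $H$ on $G$: by Fact~\ref{fact:action} it is a monoid action of $S^\star$, and by the third Fact it factors through the quotient monoid $H$ (this is exactly where $x\in X$ is used); since $H$ is a group, each $\underline{h}\in H$ then acts as a bijection of $G$, whose inverse is $g\mapsto\Phi(g,x,\bar h)$ for any word $\bar h$ with $\underline{\bar h}=\underline{h}^{-1}$. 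Note I do \emph{not} need this action to be free, which matches the fact that for a general $x\in X$ the partition may involve quotients of $H$.

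Next I would establish the transport identity
\[
\Phi(\lambda_G, g\cdot x, k) = g\,\Phi(g^{-1}, x, k),
\]
valid for all $g\in G$ and $k\in S^\star$. This is just the second Fact applied to the point $g\cdot x$ and the element $g^{-1}$: it yields $\Phi(g^{-1}, g^{-1}\cdot(g\cdot x), k) = g^{-1}\Phi(\lambda_G, g\cdot x, k)$, and $g^{-1}\cdot(g\cdot x)=x$. With this in hand I would compute $F(g\cdot x)$ for an arbitrary $g$, using the definition of $F$ and the $G$-action $(g\cdot x)_b = x_{g^{-1}b}$:
\[
F(g\cdot x)_{\underline{k}} = \sigma\bigl((g\cdot x)_{\Phi(\lambda_G, g\cdot x, k)}\bigr) = \sigma\bigl(x_{\,g^{-1}\Phi(\lambda_G, g\cdot x, k)}\bigr) = \sigma\bigl(x_{\Phi(g^{-1}, x, k)}\bigr).
\]

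It then remains to pick $g$ so that the last expression equals $F(x)_{\underline{h}^{-1}\underline{k}} = (\underline{h}\cdot F(x))_{\underline{k}}$. I would take $g := \Phi(\lambda_G, x, \bar h)^{-1}$, where $\bar h\in S^\star$ represents $\underline{h}^{-1}$, so that $g^{-1} = \Phi(\lambda_G, x, \bar h)$. By Fact~\ref{fact:action} this gives $\Phi(g^{-1}, x, k) = \Phi(\Phi(\lambda_G, x, \bar h), x, k) = \Phi(\lambda_G, x, \bar h k)$, and $\underline{\bar h k} = \underline{h}^{-1}\underline{k}$; hence $F(g\cdot x)_{\underline{k}} = \sigma(x_{\Phi(\lambda_G, x, \bar h k)}) = F(x)_{\underline{h}^{-1}\underline{k}}$ for every $\underline{k}$, which is precisely $\underline{h}\cdot F(x)$.

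The one delicate point — and the main obstacle, even though the computation itself is short — is the choice of $g$. Because $\Phi(\cdot,x,\cdot)$ is a \emph{right} action while $H$ acts on $\Sigma^H$ on the \emph{left}, the orbit map $\underline{h}\mapsto\Phi(\lambda_G,x,h)$ reverses products, so the naive candidate $g=\Phi(\lambda_G,x,h)$ fails; one must follow the edges spelling $\underline{h}^{-1}$ and then invert \emph{in the group $G$}. Throughout I would be careful to distinguish the group inverse $g^{-1}$ in $G$ from inversion inside the $\Phi$-action, as these are genuinely different elements of $G$.
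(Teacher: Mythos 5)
Your proof is correct and follows the same basic strategy as the paper: exhibit an explicit $g$ and verify $F(g\cdot x)=\underline{h}\cdot F(x)$ coordinatewise, using the transport identity $F(g\cdot x)_{\underline{k}}=\sigma(x_{\Phi(g^{-1},x,k)})$ (which is exactly the computation the paper performs inside the proof of Lemma~\ref{lemme:sens2}). The one substantive difference is the formula for $g$: the paper's one-line proof says ``take $g=\Phi(\lambda,x,h)$'', whereas you take $g=\Phi(\lambda_G,x,\bar h)^{-1}$ with $\underline{\bar h}=\underline{h}^{-1}$. Your caution here is warranted: since $h\mapsto\Phi(\lambda_G,x,h)$ is generally not a homomorphism into $G$, one has $\Phi(\lambda_G,x,\bar h)^{-1}\neq\Phi(\lambda_G,x,h)$ for a generic translation-like action (e.g.\ a single bi-infinite $\mathbb{Z}$-path through $\mathbb{Z}$ that is not a coset of a subgroup), and the paper's literal choice then yields $h'\cdot F(x)$ for the wrong $h'$ --- or, if $\Phi(\lambda_G,x,h)^{-1}$ lies in a different $\Phi$-orbit, need not yield an element of $H\cdot F(x)$ at all. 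So your version is the one that actually verifies; the lemma itself, being an existence statement, is unaffected, and its use in Lemma~\ref{lemme:sens3} goes through.
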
	
\begin{proof}
	For $\underline{h} \in H$, take $g =\phi(\lambda, x, h)$.
\end{proof}	

\begin{lemma}
	\label{lemme:sens2}
	Let $y \in \Sigma^H$. Then there exists $x \in X$ s.t. $F(x) = y$ and
	for any $g \in G$ there exists $h \in H$ s.t.
	$h \cdot F(x) = F(g\cdot x)$
\end{lemma}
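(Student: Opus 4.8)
The plan is to build $x$ by transporting $y$ along the genuine translation-like action, using that action both to fix the edge component and to tile the symbol component over all of $G$.

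First I would define the $F$-component of $x$ directly from the given action: for every $g\in G$ and every generator $h_i\in S$, set $x_g(h_i)=\phi(g,h_i)$, the label supplied by $g\circ h_i=g\,\phi(g,h_i)$. A straightforward induction on the length of $h$ then shows $\Phi(g,x,h)=g\circ\underline{h}$, where $\circ$ is the genuine right action of $H$ on $G$. Because $\circ$ is an action of the group $H$ and not merely a labelling of a graph, $g\circ\underline{h}=g\circ\underline{h'}$ whenever $\underline{h}=\underline{h'}$, in particular for every relation $(h,h')\in\mathcal R$. Hence $\Phi(g,x,h)=\Phi(g,x,h')$ for all $g$ and all relations, so that $x\in X$ regardless of the choice of symbol component.

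Next I would fix the symbols. By freeness of $\circ$ the group $G$ splits into orbits, each a copy of $H$; choosing a transversal $K$ containing $\lambda_G$ yields a unique decomposition $g=k\circ\underline{m}$ with $k\in K$ and $\underline m\in H$. I then set $\sigma(x_g)=y_{\underline m}$, that is, I print one aligned copy of $y$ on each orbit. Taking $g=\lambda_G$, whose coset coordinate is $\lambda_H$, and using $\Phi(\lambda_G,x,h)=\lambda_G\circ\underline{h}$, one reads off $F(x)_{\underline h}=\sigma(x_{\lambda_G\circ\underline h})=y_{\underline h}$, so $F(x)=y$ as required.

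The main work is the orbit property, and the step I expect to be the real obstacle is the bookkeeping around the fact that the edge component is not shift-invariant: $g\cdot x$ carries the labels $(g\cdot x)_a(h_i)=\phi(g^{-1}a,h_i)$. The key computation is a shift formula, proved by an induction of the same shape as above, using $a\,\phi(g^{-1}a,h_i)=g\,(g^{-1}a\circ h_i)$ at each step: it gives $\Phi(a,g\cdot x,h)=g\,(g^{-1}a\circ\underline h)$. Specialising to $a=\lambda_G$ yields $\Phi(\lambda_G,g\cdot x,h)=g\,(g^{-1}\circ\underline h)$, whence $F(g\cdot x)_{\underline h}=\sigma\!\big((g\cdot x)_{g(g^{-1}\circ\underline h)}\big)=\sigma(x_{g^{-1}\circ\underline h})$. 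Writing $g^{-1}=k\circ\underline m$ in the transversal gives $g^{-1}\circ\underline h=k\circ(\underline m\,\underline h)$, so its coset coordinate is $\underline m\,\underline h$ and $F(g\cdot x)_{\underline h}=y_{\underline m\,\underline h}=F(x)_{\underline m\,\underline h}$. Comparing with the left action $(h_0\cdot y)_{\underline h}=y_{\underline{h_0}^{-1}\underline h}$, this is exactly $(\underline m^{-1}\cdot F(x))_{\underline h}$, so the witness is $h=\underline m^{-1}$, where $\underline m$ is the coset coordinate of $g^{-1}$. The only delicate points are proving the shift formula cleanly and matching coset coordinates so that the inverses come out on the correct side.
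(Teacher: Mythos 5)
Your proposal is correct and follows essentially the same route as the paper: the same edge component $x_g(h_i)=\phi(g,h_i)$ giving $\Phi(g,x,h)=g\circ\underline{h}$, the same transversal-based printing of $y$ on each orbit, and the same computation reducing $F(g\cdot x)_{\underline h}$ to $\sigma(x_{g^{-1}\circ\underline h})$ and reading off the witness from the coset coordinate of $g^{-1}$. The only cosmetic difference is that you prove the shift formula $\Phi(\lambda_G,g\cdot x,h)=g(g^{-1}\circ\underline h)$ by a direct induction, where the paper invokes its previously stated fact $\Phi(g,g\cdot x,h)=g\Phi(\lambda_G,x,h)$.
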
	
\begin{proof}
By definition $H$ acts translation like on $G$ by $\circ$.
Let $T \subseteq G$ be a set of representatives of this free action,
that is for every $g \in G$ there are unique $t \in T, \underline{h} \in H$ 
s.t. $t \circ \underline{h} = g$.
We suppose that $\lambda_G \in T$.

We now define $x$ in the following way:
$\sigma(x_{t\circ \underline{h}}) = y_{\underline{h}}$ and
$x_{g}(h_i) = \phi(g,\underline{h_i})$.

By the definition of $x$, we have $\Phi(g, x, h_i) = g \circ
\underline{h_i}$, therefore we have
$\Phi(g, x,h) = g \circ \underline{h}$.
In particular $x \in X$.

Now  $F(x) = y$. Indeed ${F(x)_{\underline{j}} =
  \sigma(x_{\Phi(\lambda, x,j)}) = \sigma(x_{\lambda
	\circ \underline{j}}) = y_{\underline{j}}}$.

Now let $g \in G$ and write $g^{-1} = t \circ h^{-1}$ for
some $t,h$.
Then
\newcommand\place{\vrule height 0pt depth 12pt width 0pt}
 \[\begin{array}{r@{}c@{}l@{}c@{}l@{}c@{}l}
	 F(g \cdot x)_{\underline{j}} &=& \place \sigma( (g \cdot
	 x)_{\Phi(\lambda,g \cdot x, j)}) &=& \sigma(
x_{g^{-1}\Phi(\lambda, g \cdot x, j)}) &=& \sigma(x_{\Phi(g^{-1},x,j)}) =
\sigma(x_{g^{-1} \circ \underline{j}}) \\&=& \sigma(x_{t \circ (h^{-1}
  \underline{j})}) &=&
y_{h^{-1}\underline{j}} &=& (h \cdot y)_{\underline{j}}
\end{array}
\]
Therefore $F(g \cdot x) = h \cdot F(x)$.
	
\end{proof}

\subsection{The proof}

We now start from a SFT $X_H \subseteq \Sigma^H$ and build a SFT
$X_G \subseteq (\Sigma \times F)^G$ s.t.:

\begin{itemize}
	\item $X_G$ is empty iff $X_H$ is empty.
	\item If $X_H$ is weakly aperiodic, then $X_G$ is weakly aperiodic.
\end{itemize}	

Let $\cal P$ be a finite collections of patterns s.t.

\[ X_H = \{ y \in \Sigma^H | \forall h \in H, \forall P \in {\cal P}
	  \exists j \in \mathrm{Supp}(P), (h \cdot y)_j \neq P_j \}\]
	
Now we define 
\[ X_G = \{ x \in X | \forall g \in G, \forall P \in {\cal P}
	  \exists j \in \mathrm{Supp}(P), F(g \cdot x)_j \neq P_j \}
\]	
As before, $F(x)_j$ depends only on a finite neighborhood of the
identity, thus $X_G$ is a subshift of finite type.

\begin{lemma}
	If $X_G$ is nonempty, then $X_H$ is nonempty.
	More precisely, for any $x \in X_G$, $F(x) \in X_H$.
\end{lemma}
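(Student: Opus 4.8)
The plan is to show that $F(x)$ avoids all forbidden patterns in $\mathcal{P}$, which is exactly the condition for membership in $X_H$. Take any $x \in X_G$. By the very definition of $X_G$, we already know that for every $g \in G$ and every $P \in \mathcal{P}$ there is some $j \in \mathrm{Supp}(P)$ with $F(g \cdot x)_j \neq P_j$. What I must establish is the corresponding statement phrased in terms of the $H$-action on $F(x)$ directly, namely that for every $h \in H$ and every $P \in \mathcal{P}$ there is some $j \in \mathrm{Supp}(P)$ with $(h \cdot F(x))_j \neq P_j$.

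The key step is to translate between the $G$-action inside $X_G$ and the $H$-action on $\Sigma^H$, and this is precisely what Lemma~\ref{lemme:sens1} provides. First I would fix an arbitrary $h \in H$. By Lemma~\ref{lemme:sens1}, since $x \in X \supseteq X_G$, there exists $g \in G$ such that $h \cdot F(x) = F(g \cdot x)$. Now for any $P \in \mathcal{P}$, apply the defining condition of $X_G$ to this particular $g$: there is some $j \in \mathrm{Supp}(P)$ with $F(g \cdot x)_j \neq P_j$. Substituting $F(g \cdot x) = h \cdot F(x)$ gives $(h \cdot F(x))_j \neq P_j$. Since $h$ and $P$ were arbitrary, this is exactly the condition defining $X_H$, so $F(x) \in X_H$, and in particular $X_H$ is nonempty.

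I do not expect a serious obstacle here; the lemma is essentially a bookkeeping matching of quantifiers once Lemma~\ref{lemme:sens1} is in hand. The only point requiring mild care is ensuring the set of forbidden patterns used to cut out $X_G$ is literally the same collection $\mathcal{P}$ used for $X_H$, so that the condition $F(g \cdot x)_j \neq P_j$ ranges over exactly the patterns we need; this is guaranteed by the construction of $X_G$. The direction being proved is the ``easy'' half (the restriction map respects the constraints), and the genuine content of the overall argument lies in the converse (Lemma~\ref{lemme:sens2}) where one must build a point of $X_G$ from a point of $X_H$ and verify it lands in $X$.
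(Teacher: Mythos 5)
Your proof is correct and is exactly the argument the paper intends: the paper's proof is just the phrase ``Obvious by lemma \ref{lemme:sens1}'', and your write-up spells out precisely that deduction (fix $h$, use Lemma \ref{lemme:sens1} to produce the witnessing $g$, then invoke the defining condition of $X_G$ at that $g$). Nothing is missing and no alternative route is taken.
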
	
\begin{proof}
	Obvious by lemma \ref{lemme:sens1}.
\end{proof}
\begin{lemma}
If $X_H$ is nonempty, then $X_G$ is nonempty
\end{lemma}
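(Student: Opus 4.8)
The plan is to feed a point of $X_H$ through Lemma~\ref{lemme:sens2} and then check that the point it produces already lands in $X_G$. First I would pick any $y \in X_H$, which exists by the nonemptiness hypothesis. Applying Lemma~\ref{lemme:sens2} to this $y$, I obtain a point $x \in X$ with $F(x) = y$ enjoying the crucial orbit-compatibility property: for every $g \in G$ there is some $\underline{h} \in H$ with $F(g \cdot x) = \underline{h} \cdot F(x) = \underline{h} \cdot y$. Note that membership $x \in X$ is already handed to us by the lemma, so the only thing left to establish is the pattern-avoidance condition defining $X_G$.

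For that, fix $g \in G$ and a forbidden pattern $P \in \mathcal{P}$. By the compatibility property there is $\underline{h} \in H$ with $F(g \cdot x) = \underline{h} \cdot y$. Since $y \in X_H$, the defining condition of $X_H$, applied to this particular $\underline{h}$, yields some $j \in \mathrm{Supp}(P)$ with $(\underline{h} \cdot y)_j \neq P_j$, that is $F(g \cdot x)_j \neq P_j$. As $g$ and $P$ were arbitrary, $x$ satisfies every constraint in the definition of $X_G$, so $x \in X_G$ and hence $X_G$ is nonempty.

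The reason the argument is so short is that all the genuine work has been front-loaded into Lemma~\ref{lemme:sens2}: the delicate step is producing a single point of $X$ whose \emph{entire} $G$-orbit, read off through $F$, stays inside the $H$-orbit of $y$. Once that is available, nonemptiness is just a one-line transport of the forbidden-pattern condition from $H$ to $G$. Consequently I do not expect a real obstacle here; the only thing to be careful about is that $X_H$ forbids its patterns at \emph{every} $\underline{h} \in H$, which is exactly what lets us cover the specific $\underline{h}$ associated to each $g$. There is no circularity and no extra hypothesis is needed.
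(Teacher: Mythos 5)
Your argument is correct and is exactly the one the paper intends: the paper's proof of this lemma is literally ``Obvious by Lemma~\ref{lemme:sens2}'', and your write-up simply makes explicit the transport of the forbidden-pattern condition via the identity $F(g\cdot x) = \underline{h}\cdot y$ supplied by that lemma. Nothing is missing.
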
	
\begin{proof}
	Obvious by lemma \ref{lemme:sens2}.
\end{proof}

\begin{lemma}
	\label{lemme:sens3}
	If $X_H$ is weakly aperiodic, then $X_G$ is weakly aperiodic.	
\end{lemma}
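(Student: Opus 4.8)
The plan is to verify the two defining properties of weak aperiodicity for $X_G$ separately: nonemptiness, and the absence of strongly periodic points. Nonemptiness is immediate, since $X_H$ weakly aperiodic means in particular that $X_H$ is nonempty, and the preceding lemma already guarantees that $X_G$ is then nonempty. Thus the whole content of the statement lies in ruling out strongly periodic points, which I would establish by contraposition: I would show that any strongly periodic point of $X_G$ produces a strongly periodic point of $X_H$.

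The key observation is that $F$ transfers the $G$-orbit structure into the $H$-orbit structure in a controlled way. Suppose $x \in X_G$ is strongly periodic, so that its orbit $G \cdot x$ is finite. Since $X_G \subseteq X$, Lemma~\ref{lemme:sens1} applies to $x$: for every $h \in H$ there exists $g \in G$ with $h \cdot F(x) = F(g \cdot x)$. Consequently the entire orbit $H \cdot F(x)$ is contained in the image $F(G \cdot x)$. As $F(G \cdot x)$ is the image of a finite set under $F$, it is finite, and therefore $H \cdot F(x)$ is finite as well.

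It then remains to observe that $F(x)$ genuinely lies in $X_H$, which is exactly the content of the earlier lemma asserting $F(x) \in X_H$ for every $x \in X_G$. Hence $F(x)$ is a point of $X_H$ whose orbit $H \cdot F(x)$ is finite, i.e. a strongly periodic point of $X_H$. This contradicts the hypothesis that $X_H$ is weakly aperiodic, so $X_G$ admits no strongly periodic point. Together with nonemptiness this shows that $X_G$ is weakly aperiodic.

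I do not expect a serious obstacle here: the argument is a direct orbit-size consequence of Lemma~\ref{lemme:sens1}. The only delicate point is using the correct direction of containment, namely $H \cdot F(x) \subseteq F(G \cdot x)$, which follows from the ``for every $h$ there exists $g$'' formulation of Lemma~\ref{lemme:sens1}; the reverse containment would instead require Lemma~\ref{lemme:sens2} and hold only for the specially constructed point, so it must not be invoked for an arbitrary $x \in X_G$.
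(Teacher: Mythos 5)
Your proposal is correct and follows exactly the paper's own argument: apply Lemma~\ref{lemme:sens1} to a strongly periodic $x \in X_G$ to get $H \cdot F(x) \subseteq F(G \cdot x)$, conclude that the orbit of $F(x) \in X_H$ is finite, and contradict the weak aperiodicity of $X_H$. You are in fact slightly more explicit than the paper about nonemptiness and about checking $F(x) \in X_H$, but the route is the same.
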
	
\begin{proof}
Let $x \in X_G$ s.t the orbit of $x$ is finite and $y = F(x)$.

By lemma \ref{lemme:sens1}, $H \cdot F(x) \subseteq F  (G \cdot x )$ therefore the
orbit of $F(x)$ is finite.

\end{proof}	
We now have proven:
\begin{theorem}
Suppose that $H$ is f.p. and acts translation-like on $G$.

Then there is an effective procedure that transforms any SFT $X_H
\subseteq \Sigma^H$ to a SFT $X_G \subseteq (\Sigma \times F)^G$ s.t.:
\begin{itemize}
	\item $X_G$ is empty iff $X_H$ is empty.
	\item If $X_H$ is weakly aperiodic, then $X_G$ is weakly aperiodic.
\end{itemize}	
In particular:
\begin{itemize}
	\item If $H$ has a weakly aperiodic SFT, then so does $G$.
	\item If $H$ has an undecidable domino problem, then so does $G$.
\end{itemize}	
\end{theorem}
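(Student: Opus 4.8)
The plan is to observe that the substance of the theorem already resides in the preceding lemmas, so that the statement is assembled by combining them; the only genuinely new ingredient is to verify that the transformation $X_H \mapsto X_G$ is \emph{effective}, and this is where finite presentability of $H$ is invoked a second time.

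First I would take the effective procedure to be precisely the construction already set up. Given a finite family of forbidden patterns ${\cal P}$ defining $X_H$, I output the finite list of constraints defining $X_G$: the constraints cutting out $X$, which are finite in number because they come from the finite relation set ${\cal R}$ (the relations of $H$, finite since $H$ is finitely presented) together with the finite action data realising the translation-like action over the finite sets $S$ and $S_G$; and the pulled-back patterns demanding $F(g\cdot x)_j \neq P_j$. Since each value $F(x)_j$ depends on only a bounded neighbourhood of $\lambda_G$, this is a finite pattern list over the finite alphabet $\Sigma \times F$, so $X_G$ is an SFT and the list can be written down algorithmically from ${\cal P}$, the presentation of $H$, and the action data.

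For the emptiness equivalence I would cite the two nonemptiness lemmas directly: if $X_G$ is nonempty then $F(x) \in X_H$ for every $x \in X_G$ by Lemma~\ref{lemme:sens1}, so $X_H$ is nonempty; conversely, if $X_H$ is nonempty I pick $y \in X_H$, use Lemma~\ref{lemme:sens2} to produce $x \in X$ with $F(x) = y$ whose $G$-orbit is carried by $F$ into the $H$-orbit of $y$, and since $X_H$ is $H$-invariant this forces every $F(g\cdot x)$ into $X_H$, hence $x \in X_G$. For weak aperiodicity I invoke Lemma~\ref{lemme:sens3}: were some $x \in X_G$ to have finite $G$-orbit, then $F(x)$ would have finite $H$-orbit, yet $F(x) \in X_H$ and $X_H$ is weakly aperiodic, a contradiction; together with nonemptiness this yields that $X_G$ is weakly aperiodic whenever $X_H$ is.

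The two consequences then follow formally. The weakly aperiodic SFT statement is the specialisation to a fixed weakly aperiodic $X_H$. For the domino problem, the construction is a computable, emptiness-preserving reduction, so any algorithm deciding emptiness of SFTs over $G$ would decide emptiness of SFTs over $H$ after composition, and undecidability transfers from $H$ to $G$. I expect the effectiveness claim to be the one step that is not a one-line appeal to an earlier lemma: one must confirm that the data describing $X$ are genuinely finite and computable, which is exactly what finite presentability of $H$ provides.
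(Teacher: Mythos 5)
Your proposal is correct and follows essentially the same route as the paper, which presents this theorem as the direct assembly of the two nonemptiness lemmas (Lemmas~\ref{lemme:sens1} and~\ref{lemme:sens2}) and the aperiodicity lemma (Lemma~\ref{lemme:sens3}), with effectiveness coming from the finiteness of $\mathcal{R}$ and of the pulled-back pattern list. The only quibble is that the construction of $X_G$ itself uses only the finite sets $S$, $\mathcal{R}$, $S_G$ and $\mathcal{P}$ (the full action map $\phi$ is infinite data and is needed only in the proof of nonemptiness, not to write down the SFT), but this does not affect the correctness of your argument.
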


In the following we will need the following refinement of lemma
\ref{lemme:sens3}
\begin{lemma}		
	\label{lemme:sens4}	
	Suppose there exists $n$ s.t. if $x \in X_H$ has finite stabilizer
	then $n$ divides $[H:Stab(X_H)]$.

	If $X_G$ contains a periodic point, then there exists a subgroup
	of $G$ of finite index divisible by $n$.
\end{lemma}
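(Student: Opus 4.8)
The plan is to take a strongly periodic point $x \in X_G$, consider its finite orbit $O = G \cdot x$, and show that $[G : Stab_G(x)] = |O|$ is divisible by $n$; then $Stab_G(x)$ is the required finite-index subgroup.

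First I would equip the finite set $O$ with a right $H$-action that refines the transitive $G$-action. For $z \in O$ and $\underline{h} \in H$, set $z \star h := \Phi(\lambda_G, z, h)^{-1} z$. Since $\Phi(\lambda_G, z, h)$ depends only on $\underline{h}$ for $z \in X$ (by the fact that $\underline{h} = \underline{h'}$ forces equal values of $\Phi$), this descends to $H$; it lands in $O$ because $O$ is a $G$-orbit; and it is a genuine right action, which follows from Fact~\ref{fact:action} together with the identity $\Phi(a, z, h') = a\,\Phi(\lambda_G, a^{-1}z, h')$ obtained from the second fact. The crucial feature is the intertwining $F(z \star h) = h^{-1} \cdot F(z)$, which I would read off from the computation $F(g \cdot x)_{\underline{j}} = \sigma(x_{\Phi(g^{-1}, x, j)})$ already carried out inside the proof of Lemma~\ref{lemme:sens2}, applied with $g = \Phi(\lambda_G, z, h)^{-1}$ and simplified via Fact~\ref{fact:action}.

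Next, for each $z \in O$ I would argue that $F(z) \in X_H$ (by the lemma stating $F(x) \in X_H$ for $x \in X_G$) and that $F(z)$ is strongly periodic, since $H \cdot F(z) \subseteq F(G \cdot z) = F(O)$ is finite by Lemma~\ref{lemme:sens1}. The hypothesis then yields $n \mid [H : Stab_H(F(z))]$. The intertwining shows $Stab_\star(z) \subseteq Stab_H(F(z))$ (if $z \star h = z$ then $F(z) = h^{-1} \cdot F(z)$), so by multiplicativity of indices $[H : Stab_H(F(z))]$ divides $[H : Stab_\star(z)]$, which equals the cardinality of the $\star$-orbit of $z$ by orbit--stabilizer. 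Hence every $\star$-orbit inside $O$ has size divisible by $n$.

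Finally, $O$ is the disjoint union of these $\star$-orbits, so $n$ divides $|O| = [G : Stab_G(x)]$, and $Stab_G(x)$ is a finite-index subgroup whose index is divisible by $n$, as required. I expect the main obstacle to be the bookkeeping around the $\star$-action: verifying that it is well-defined on $H$, stays inside the finite orbit $O$, is genuinely associative, and intertwines with $F$ in the correct (inverse) direction. Once these are settled, the divisibility is a routine orbit count, the one real idea being to transfer the control the hypothesis gives over the sizes of $H$-orbits downstairs to the $\star$-orbits upstairs, which tile $O$, using index multiplicativity.
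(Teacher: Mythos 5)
Your proof is correct, and it is a cleaner realization of the same underlying idea as the paper's. The paper also reduces to counting a finite $G$-set partitioned into pieces whose sizes are indices of subgroups of $H$ containing the stabilizer of some $F(z)\in X_H$; but it does so by first passing to a finite-index \emph{normal} subgroup $K\subseteq Stab(x)$, defining an equivalence relation $g\sim g'$ iff $\exists h,\ \Phi(g,x,h)g'^{-1}\in K$ on $G/K$, and showing each class has cardinality $[H:H_1]=[H:H_2][H_2:H_1]$ with $H_2=Stab_H(F(\cdot))$, so that $n$ divides $[G:K]$. You instead work directly on the finite orbit $O=G\cdot x$, equip it with the genuine right $H$-action $z\star h=\Phi(\lambda_G,z,h)^{-1}\cdot z$ (whose associativity and intertwining $F(z\star h)=h^{-1}\cdot F(z)$ you verify exactly as needed, via Fact~\ref{fact:action} and the conjugation identity for $\Phi$), and tile $O$ by $\star$-orbits. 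This buys you three things: no appeal to the existence of a normal finite-index subgroup, an honest group action in place of an ad hoc equivalence relation (so orbit--stabilizer does the counting), and the sharper conclusion that $n$ divides $[G:Stab_G(x)]$ itself rather than merely $[G:K]$. The only points demanding care --- well-definedness of $\star$ on $H$ rather than on $S^\star$, stability of $O$ under $\star$, and the direction of the intertwining --- are all handled correctly, so the argument goes through.
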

\begin{proof}
Suppose that $G$ has  aperiodic point $x$. That is $Stab(x)$ is of
finite index $G$. Then there exists a normal subgroup $K \subseteq H$ of $G$ 
which is of finite index. Write $G = AK$.
By normality, if $g \in K$  and $g'$ in $G$ then $gg' x= g'x$.

Define $g \sim g'$ iff $\exists h, \phi(g, x, h) g'^{-1} \in K$.
It is easy to see that $\sim$ is an equivalence relation on $G$ that
factors into an equivalence relation on $G/K$.

Let $\cal E$ be an equivalence class on $G/K$, and $g$ some element of
$\cal E$.
Let $H_1 = \{ H | \phi(g,x,h) = g\}$.

$H_1$ is a subgroup of $H$ and   by definition $[H:H_1] = |\cal E|$

Now, let $H_2 = \{ h \in H | hF(x) = F(x)\}$.
By definition, $H_2$ contains $H_1$ (remember that $hF(x) = F(\phi(\lambda, x,h) x)$.

Therefore $|{\cal E}| = [H:H_1] = [H:H_2][H_2:H_1]$.

Thus every equivalence class is of cardinality divisible by $n$, and
therefore $G/K$ is of cardinality divisible by $n$.

\end{proof}
\clearpage
\section{Applications}
In this section, we use our main theorem to prove the existence of
aperiodic SFTs on some classes of groups.

\subsection{Amenability}

\begin{theorem}[\cite{Whyte}]
	$\mathbb{F}_2$ acts translation-like on any non amenable f.g. group.
\end{theorem}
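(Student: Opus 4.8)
The statement to prove reformulates cleanly in the language of the paper: since $\mathbb{F}_2 = \langle a,b\rangle$ is free, a translation-like action of $\mathbb{F}_2$ on $G$ is exactly a pair of fixed-point-free, \emph{bounded-displacement} permutations $\alpha,\beta$ of $G$ (the right actions of $a$ and $b$, so $g\alpha = g\phi(g,a)$ with $\phi(g,a)\in S_G$, and similarly for $b$) whose joint action is \emph{free}, i.e. no nontrivial reduced word in $\alpha^{\pm1},\beta^{\pm1}$ fixes a point. Geometrically this is precisely a partition of a bounded-edge subgraph of $C(G,S_G)$ into disjoint copies of the $4$-valent tree $T_4 = C(\mathbb{F}_2,\{a,b\})$. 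So the plan is to produce such a pair $(\alpha,\beta)$ from non-amenability alone; this is essentially Whyte's argument, which I would recall and adapt.

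The engine is the combinatorial form of non-amenability. First I would pass from the Følner criterion failing to a \emph{doubling} inequality: there is a radius $r$ and a finite $S_G$ with $|N_r(A)| \ge 2|A|$ for every finite $A\subseteq G$, where $N_r(A)$ is the $r$-neighborhood (obtained by iterating $|\partial A|\ge\varepsilon|A|$). Feeding this into the infinite version of Hall's marriage theorem (via König's theorem and a compactness/local-finiteness argument on the bipartite graph connecting $g$ to its $r$-ball) yields a bounded-displacement map $\phi:G\to G$ that is exactly $2$-to-$1$. Reading $\phi$ as a ``parent'' map, every vertex then carries one outgoing parent-edge and two incoming child-edges, so the functional graph of $\phi$ is a bounded-degree forest with each node $3$-valent off the finite cycles.

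The remaining work is structural. I would first arrange that the Hall system has \emph{no finite cycles} — either by building into the Hall argument an auxiliary biinfinite ($\mathbb{Z}$-like) structure, or by a further bounded correction absorbing the finitely many periodic vertices in each component — so that every component is a genuine infinite tree. I would then orient and $2$-colour the edges of each tree so that every vertex has exactly one incident edge of each type $a^{\pm1}, b^{\pm1}$; this upgrades the trivalent structure to $T_4$ and defines $\alpha$ and $\beta$ as honest bounded-displacement permutations. \emph{Freeness} then follows automatically: a nontrivial reduced word fixing a vertex would trace a nontrivial closed reduced path inside a tree, which is impossible. The main obstacle, and the point where non-amenability is genuinely used, is reconciling the two competing demands — keeping displacement bounded while forcing the action to be free (equivalently, expelling all finite cycles) — and it is exactly the quantitative doubling inequality that makes Hall's theorem applicable and thereby resolves this tension; obtaining two genuine permutations rather than merely bounded injections is the secondary technical point handled by the edge-colouring step.
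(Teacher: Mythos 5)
You should first be aware that the paper contains no proof of this statement: it is imported from Whyte \cite{Whyte} as a black box, used only to deduce the corollary that nonamenable groups have weakly aperiodic SFTs (a corollary for which the paper even supplies an independent route via paradoxical decompositions). So there is no in-paper argument to compare against; what you have written is a reconstruction of Whyte's theorem, and it must stand on its own. Your engine is the correct one and is indeed Whyte's: non-amenability yields a doubling inequality $|N_r(A)|\ge 2|A|$, and the locally finite Hall--K\"onig theorem, applied on \emph{both} sides of the bipartite graph (both Hall conditions follow from doubling), produces an exactly $2$-to-$1$ surjection $p:G\to G$ with $d(g,p(g))\le r$.

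The gap is in your final step, and it is arithmetic. The functional graph of a $2$-to-$1$ map is $3$-valent (one parent, two children), and a vertex of degree $3$ cannot carry ``exactly one incident edge of each type $a^{\pm1},b^{\pm1}$'': that requires degree $4$. As written, the edge-colouring step is impossible --- the cycle-free components are copies of the $3$-regular tree $T_3$, not of $T_4=C(\mathbb{F}_2,\{a,b\})$, and no relabelling changes that. Two standard repairs: (i) iterate the expansion once more to get $|N_r(A)|\ge 3|A|$ and run Hall to obtain an exactly $3$-to-$1$ map, whose cycle-free components are $4$-regular trees; the outward label-propagation you describe then works verbatim and freeness follows as you say. (ii) Keep the trivalent forest, properly $3$-edge-colour each tree to read it as the Cayley graph of $\mathbb{Z}/2 * \mathbb{Z}/2 * \mathbb{Z}/2$ acting freely with bounded displacement, and restrict to an index-$2$ torsion-free (hence free of rank $2$) subgroup --- restricting a translation-like action to a finitely generated subgroup preserves freeness and bounded displacement. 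Separately, your cycle-elimination step is asserted rather than proved: Hall's theorem gives no control over finite cycles in the functional graph, infinitely many components may each contain one, and the ``bounded correction'' must be uniform over all of them while preserving the exact multiplicity of $p$. This is a genuine piece of work in Whyte's argument, not a remark, and as it stands your proposal does not contain it.
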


\begin{cor}
	Any f.g. non amenable group admits a weakly aperiodic SFT.
\end{cor}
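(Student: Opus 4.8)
The plan is to combine the theorem of Whyte with the main theorem of this paper. The corollary states that any finitely generated non-amenable group admits a weakly aperiodic SFT, and the ingredients are already laid out: Whyte's theorem tells us that $\mathbb{F}_2$ acts translation-like on any non-amenable f.g. group $G$, and the main theorem tells us that if $H$ is finitely presented, acts translation-like on $G$, and has a weakly aperiodic SFT, then $G$ does too.

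\begin{proof}
Let $G$ be a finitely generated non-amenable group. By Whyte's theorem, $\mathbb{F}_2$ acts translation-like on $G$. The free group $\mathbb{F}_2$ is finitely presented (indeed it has the presentation $\langle a,b \mid \ \rangle$ with no relations), and by \cite{Piantadosi} it admits a weakly aperiodic SFT. We may therefore apply the main theorem with $H = \mathbb{F}_2$: since $H$ is finitely presented, acts translation-like on $G$, and has a weakly aperiodic SFT, it follows that $G$ has a weakly aperiodic SFT.
\end{proof}

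The argument is essentially a two-line invocation, so there is no genuine obstacle in the proof itself; the real content has already been absorbed into the two results being cited. The only points worth checking are the hypotheses of the main theorem. First, that $\mathbb{F}_2$ is finitely presented — this is immediate, as a free group of finite rank is its own presentation with an empty relation set, so the set $\mathcal{R}$ in the construction is empty and the subshift $X$ of the construction is simply all of $(\Sigma \times F)^G$. Second, that $\mathbb{F}_2$ has a weakly aperiodic SFT — this is recorded earlier in the excerpt (the free groups of rank $\geq 2$ admit weakly aperiodic SFTs, citing \cite{Piantadosi}). With both hypotheses verified, the main theorem delivers the conclusion directly.

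I would also remark that the same reasoning yields the stronger statement that any f.g. non-amenable group has an undecidable domino problem, provided $\mathbb{F}_2$ does, but since the free group is only asserted to have a weakly aperiodic SFT (and not an undecidable domino problem) in the survey above, the corollary is stated only for weak aperiodicity. This is consistent with the remark in the introduction that among the earlier examples, the free group is the one case giving aperiodicity without undecidability.
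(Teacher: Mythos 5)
Your proof is correct and is exactly the paper's argument: the paper likewise combines Whyte's theorem (stated immediately before the corollary) with Piantadosi's weakly aperiodic SFT on the finitely presented group $\mathbb{F}_2$ and the main theorem. Your extra remarks on verifying the hypotheses are accurate but add nothing beyond what the paper's one-line proof already relies on.
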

\begin{proof}
Piantadosi	\cite{Piantadosi} exhibited a weakly aperiodic SFT on
$\mathbb{F}_2$, which is a finitely presented group.
\end{proof}
This result can actually be obtained without any reference to
translation-like action, or uniformly finite homology \cite{Block}.
\begin{proof}[Alternative proof]
As $G$ is nonamenable, it admits a (right) paradoxical decomposition, i.e.
$G$ can be partitionned into subsets $A_{-n}, A_{-n-1} \dots A_m\}$ s.t.
\[ G = \biguplus_{i\geq 0}  A_i g_i = \biguplus_{j < 0} A_j g_j\] for some group elements $(g_i)_{-n \leq i\leq m}$
(The use of negative indices while unusual will make the rest of the exposition easier)

Now consider the following subshift

\[
	\begin{array}{rcl}
		X &=& \left\{ x \in \{-n, \dots m\}^G | \forall g  \exists!i \geq
		  0, x_{gg_i^{-1}}  = i \wedge \exists!j > 0, x_{gg_j^{-1}} = j \right\}\\
		 &=& \left\{ x \in \{-n, \dots m\}^G | \forall g  \exists!i \geq
		  0, (g \cdot x)_{g_i^{-1}}  = i \wedge \exists!j > 0, (g \cdot
		  x)_{g_j^{-1}} = j \right\}\\
\end{array}		
\]
$X$ is clearly a SFT.

By defining $x_g = j$ if $g \in A_j$ we see that $X$ is nonempty.

Suppose some $x \in X$ has a stabilizer $H$ of finite index in $G$.
Let $N \subseteq H$ be a normal subgroup of $G$ of finite index.
Then $x$ induces a paradoxical decomposition of the finite group $G/N$, a contradiction.
\end{proof}

The above example of Piantadosi is linked with a homology group
introduced by Chazottes et al.\cite{Gambaudo}: For a SFT in
$\mathbb{Z}^2$ given by Wang tiles\cite{wangpatternrecoII}, one can associate
a finite system of linear equations s.t. if $X$ is nonempty, then the
system has a nontrivial solution with nonnegative coefficients.
This system has as many unknowns as elements of the alphabet, and express
the fact that, for $x \in X$, the frequencies of each element of the
alphabet in $x$ should satisfy some natural conditions (of course
some elements of $X$ do not have frequencies, but frequencies exist
$\mu$-almost surely for $\mu$ an ergodic measure on $X$).
For example consider the example from Piantadosi, taken in  $\mathbb{Z}^2$ with
generators $a$ and $b$, instead of $\mathbb{F}_2$.
\[ X = \{ x \in \{0,1,2\}^{\mathbb{Z}^2} |   x_{ai}  = (x_{i} + 1)\mod 	  3   \wedge x_{i} \not= 1  \iff  x_{bi} = 1\}\]
Then the system of equations we obtain would be
\[
\begin{array}{rcl}
z_0 &=& z_1\\
z_1 &=& z_2\\
z_2 &=& z_0\\
z_1  &=& z_0 + z_2\\
\end{array}
\]
whose only solution is $z_0 = z_1 =  z_2 = 0$, ergo $X$ is empty.  	 

Using suitable analogues of the ergodic theorem for amenable groups,
it is quite likely that the condition of Chazottes et al. could be generalized to any amenable
f.g. group $G$. However the above example shows this is not true
anymore in a nonamenable group.

\subsection{Subgroups of finite index}

If a point of a subshift in $A^G$ is not aperiodic, 
it means that his stabilizer is of finite index in $G$.
If every subshift of finite type is periodic, this implies some
properties on the lattice of subgroups of finite index.
Conversely, we show here that groups for which this lattice behaves 
badly have aperiodic SFTs.

\begin{proposition}
A f.g. and not residually finite group admits a weakly aperiodic SFT.
\end{proposition}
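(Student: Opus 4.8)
The plan is to exploit the following consequence of non-residual-finiteness: there is a nonidentity element $g_0$ that lies in \emph{every} finite-index subgroup of $G$ (equivalently, in every finite-index \emph{normal} subgroup, since each finite-index subgroup contains its finite-index normal core). I would use this single distinguished element to write down one very simple SFT that is forced to be weakly aperiodic.

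Concretely, I would take the alphabet $A = \{0,1,2\}$ and set
\[ X = \{\, x \in A^G \mid \forall g \in G,\ x_g \neq x_{g g_0} \,\}. \]
Since $g g_0$ is reached from $g$ by following a fixed word representing $g_0$ in the generators, this is a local (bounded-window) condition, so $X$ is an SFT, and it is clearly $G$-invariant. The first routine step is to check $X \neq \emptyset$: the graph on vertex set $G$ with edges $\{g, gg_0\}$ has maximum degree $2$, so each connected component is a bi-infinite path (if $g_0$ has infinite order) or a finite cycle of length $\mathrm{ord}(g_0) \geq 2$, and every such component is properly $3$-colourable; colouring the components independently yields a point of $X$.

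The heart of the argument is to show $X$ has no strongly periodic point. Suppose $x \in X$ has $\mathrm{Stab}(x)$ of finite index, and let $N$ be its normal core, a finite-index normal subgroup; by the choice of $g_0$ we have $g_0 \in N \subseteq \mathrm{Stab}(x)$. Since every element of $N$ fixes $x$, the value $x_h$ depends only on the coset $N h$. Now normality of $N$ together with $g_0 \in N$ forces $g$ and $gg_0$ into the same coset (as $g g_0 g^{-1} \in N$), so $x_g = x_{gg_0}$ for every $g$, contradicting the defining condition of $X$. Hence $X$ is a nonempty SFT with no strongly periodic point, i.e. a weakly aperiodic SFT.

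The step I expect to be the main obstacle, and the only genuinely non-formal point, is the clash between the side on which $G$ acts and the side on which SFT constraints live: stabilisers are governed by the left action $(g\cdot x)_h = x_{g^{-1}h}$, whereas the condition $x_g \neq x_{gg_0}$ is a \emph{right} translation and is the only kind a local rule can express. Directly, $g_0 \in \mathrm{Stab}(x)$ only yields the left-invariance $x_{g_0 h} = x_h$, which is not a local statement and contradicts nothing. The resolution is precisely to pass to the normal core $N$ rather than to $\mathrm{Stab}(x)$ itself: normality is exactly what converts the left-invariance under $N$ into the right-translation identity $x_g = x_{gg_0}$, bridging the two sides. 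Everything else (that $X$ is an SFT, $G$-invariant, and nonempty) is routine.
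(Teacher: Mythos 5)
Your proof is correct and is essentially the paper's proof: the same SFT $\{x \in \{0,1,2\}^G \mid x_g \neq x_{gg_0}\}$ built from an element $g_0$ lying in every finite-index normal subgroup, with the same nonemptiness and aperiodicity conclusions. The only difference is that your detour through the normal core to reconcile left and right translation is unnecessary: evaluating the left-invariance $x_{g_0 h} = x_h$ at $h = \lambda$ gives $x_{g_0} = x_\lambda$, which already contradicts the defining constraint at $g = \lambda$, so $g_0 \notin \mathrm{Stab}(x)$ directly and $\mathrm{Stab}(x)$ is a finite-index subgroup avoiding $g_0$ (the paper argues this way, using the core only implicitly to pass from ``every finite-index normal subgroup'' to ``every finite-index subgroup'').
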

In particular f.g. simple groups admit weakly aperiodic SFTs.
\begin{proof}
As $G$ is not residually finite, there exists a nontrivial element $a \in G$
s.t. every normal subgroup of finite index of $G$ contains $a$.

\[ X = \{ x \in \{0,1,2\}^G | \forall g, (g \cdot x)_\lambda \not= (g \cdot x)_a\}\] 
	X is a SFT and it is easy to see that it is nonempty.
	
	Suppose that $X$ contains a weakly periodic configuration $x$.
	By definition of $X$, $a^{-1} \cdot x \not= x$ therefore the stabilizer of $x$ is a
	proper subgroup of $G$ of finite index which does not contain $a^{-1}$, a contradiction	
\end{proof}	
This gives a different proof on the existence of an aperiodic SFT on
some Baumslag-Solitar groups.

\iffalse
We can go a bit further:
\begin{proposition}
 Let $G$ be a f.g. group. Suppose there is a f.g. subgroup $H$ of $G$
 and an element $a \not\in H$ s.t. every subgroup of G of finite index
 that contains $H$ also contains $a$.
 Then $G$ admits a weakly aperiodic SFT
\end{proposition}
\begin{proof}
We take \[ X = \{ x \in \{0,1,2\}^G | \forall g, (g \cdot x)_\lambda
  \not= (g \cdot x)_a \wedge \forall h, (g \cdot x)_h = (g \cdot
  x)_\lambda \}\]

Donc: 

pour tout g, ga != g
et 
pour tout g, gh = g
Donc GH = g

Ecrivons G = KH

la couleur ne dépend que de K par def
relions deux éléments k,k' s'il existe h,h' tel que

kha = k'h'
donc k hah'^{-1} = k'
Si  aH = Ha c'est gagné.
Sinon c'est perdu

\end{proof}	

\fi

Using translation-like action, we can say a bit more:
\begin{theorem}[\cite{Seward}]
	$\mathbb{Z}$ acts translation-like on any f.g. infinite group.
\end{theorem}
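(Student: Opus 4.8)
The goal is to show that $\mathbb{Z}$ acts translation-like on any f.g. infinite group $G$. Recall from the warmup in Section~3 that the idea is to select, inside the Cayley graph $C(G;S)$, exactly one outgoing and one incoming edge per vertex, so the selected subgraph $\mathcal{G}$ decomposes into bi-infinite paths and finite cycles; a genuine translation-like action of $\mathbb{Z}$ requires that $\mathcal{G}$ contain \emph{no} cycles, i.e. be a disjoint union of bi-infinite lines. The plan is therefore to produce a bijection $f\colon G\to G$ that moves every vertex by one of finitely many generators ($f(g)=g\,\phi(g)$ with $\phi(g)\in S_G$ for a finite $S_G$) and whose orbits are all infinite; setting $g\circ 1 = f(g)$ and extending by iteration then gives the free right $\mathbb{Z}$-action with the required finite edge-labelling.

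First I would reduce to a purely combinatorial statement about the Cayley graph: since $G$ is infinite and f.g., $C(G;S)$ is an infinite, connected, locally finite, vertex-transitive graph of bounded degree. The task becomes finding a bounded-displacement bijection $f$ of the vertex set whose every orbit under iteration is bi-infinite (a ``Hamiltonian-like'' bounded permutation with only infinite orbits). Here ``bounded displacement'' means $f(g)g^{-1}$, or rather $g^{-1}f(g)$, ranges over a fixed finite set, which is exactly what packages $f$ into a labelling by a finite $S_G$.

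The key step is a compactness/matching argument. For each finite radius $r$, one wants a perfect matching in the bipartite ``successor'' graph on two copies of $G$ (joining $g$ to each $gs$, $s\in S\cup S^{-1}$) that is \emph{efficient} in the sense that no short cycles are created; the standard device is Hall's theorem together with the infinitude of $G$ to rule out finite cycles. Concretely I would build, for each $n$, a bijection on a finite portion that uses each vertex once as a source and once as a target, then push the ``cycle-breaking'' to infinity by insisting each chosen chain stays within a growing ball yet never closes up; a König/compactness argument over all $n$ extracts a global bijection $f$ whose orbits are all infinite. The freeness of the resulting $\mathbb{Z}$-action is then immediate: if $g\circ k = g$ for some $k\neq 0$ then $g$ lies on a finite orbit of $f$, contradicting the construction.

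The main obstacle is precisely ruling out finite cycles while keeping displacement bounded. A naive per-vertex choice of a successor edge always yields a union of lines and cycles, and one must argue that the cycles can be eliminated globally; this is where the infinitude of $G$ is essential and where the real content of Seward's theorem lies. I expect this to require either a clever explicit construction using the structure of ends of $G$, or the abstract matching-plus-compactness route sketched above, which is cleaner but hides the combinatorics inside Hall's condition. Once a line decomposition is in hand, verifying the two axioms of a translation-like action (freeness and the finite-$S_G$ labelling) is routine and follows directly from the definition in Section~2.
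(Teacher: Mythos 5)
This theorem is not proved in the paper at all: it is imported verbatim from Seward's article, so there is no in-paper argument to compare against, and what you are really attempting is a from-scratch proof of Seward's theorem. Your reformulation is the right one --- a translation-like $\mathbb{Z}$-action is exactly a bijection $f\colon G\to G$ with $g^{-1}f(g)$ ranging over a fixed finite set and with every $f$-orbit bi-infinite --- and this matches how Seward sets the problem up. But the proof has a genuine gap at precisely the point you flag and then wave past. Hall's theorem applied to the bipartite successor graph (joining $g$ to $gs$, $s\in (S\cup S^{-1})\setminus\{\lambda_G\}$) only produces a perfect matching, i.e.\ a fixed-point-free bounded-displacement bijection; it says nothing about longer finite cycles. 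The claim that ``Hall's theorem together with the infinitude of $G$'' rules out finite cycles is false as stated: on any infinite group the bijection that swaps $g$ and $gs$ along a pairing of cosets of $\langle s\rangle$ is fixed-point-free, has bounded displacement, and has \emph{only} finite orbits (all $2$-cycles). Eliminating finite cycles of all lengths while keeping the displacement bounded is the entire content of the theorem, and no mechanism for it appears in your sketch.

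The compactness step is also not sound as described. A bijection of a finite portion of $G$ onto itself decomposes entirely into finite cycles, so ``chains that stay within a growing ball yet never close up'' cannot be realized by honest bijections of finite sets; you would have to work either with partial injections whose orbit-segments all reach the boundary of the ball (and then prove the limit is still a bijection), or with global bounded-displacement bijections $f_n$ having no cycle of length $\le n$ (in which case a pointwise limit does work, since ``$f$ has a cycle of length $\le n$ through $g$'' depends on finitely many coordinates). In either variant the cycle-killing must be achieved \emph{before} the compactness argument, not by it. The standard way to supply the missing engine --- essentially Seward's route --- is iterated matching and contraction: find a perfect matching of the Cayley graph (possible because every component of the relevant auxiliary graphs is infinite, which is where Hall's condition genuinely uses infiniteness), contract the matched edges, match again, and so on; after $n$ stages every vertex lies on a path of $2^n$ vertices whose steps are all edges of $C(G;S)$, and a limiting argument turns these into a partition of $G$ into infinite paths with displacement still controlled by $S$. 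Without some such device (or an explicit construction via the structure of ends), the proposal identifies the theorem correctly but does not prove it.
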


\begin{cor}
	Let $G$ be some infinite f.g. group. Suppose there exists $n$ s.t.
	$G$ does not contain any subgroup of finite index divisible by $n$.
	
	Then $G$ has a weakly aperiodic SFT.
\end{cor}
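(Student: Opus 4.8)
The goal is to build a weakly aperiodic SFT on $G$ by importing one onto a copy of $\mathbb{Z}$ sitting inside $G$ via translation-like action. First I would invoke the cited theorem of Seward: $\mathbb{Z}$ acts translation-like on any f.g.\ infinite group $G$. Since $\mathbb{Z}$ is finitely presented, the main theorem applies and lets us transport any SFT $X_H \subseteq \Sigma^{\mathbb{Z}}$ to an SFT $X_G$ on $G$ with the two transfer properties (nonemptiness is preserved, and weak aperiodicity is preserved). So the strategy is: choose a suitable SFT on $\mathbb{Z}$ whose periodic points have periods constrained in a way that interacts badly with the divisibility hypothesis, and then appeal to Lemma~\ref{lemme:sens4} rather than to the plain transfer of weak aperiodicity.

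\emph{Choosing the right SFT on $\mathbb{Z}$.} The subtlety is that $\mathbb{Z}$ has \emph{no} weakly aperiodic SFT (this is stated in the first summary theorem), so I cannot simply push a weakly aperiodic SFT from $\mathbb{Z}$ up to $G$. Instead the point must be to control the \emph{index} of the stabilizers. I would take the very simple SFT
\[
X_H = \{\, y \in \{0,\dots,n-1\}^{\mathbb{Z}} \mid \forall k,\ y_{k+1} = (y_k + 1) \bmod n \,\},
\]
i.e.\ the counter mod $n$. This is nonempty, and every point of $X_H$ is periodic with stabilizer exactly $n\mathbb{Z}$, so $[\mathbb{Z} : \mathrm{Stab}(y)] = n$ for every $y \in X_H$. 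In particular the hypothesis of Lemma~\ref{lemme:sens4} holds with this $n$: whenever a point of $X_H$ has finite stabilizer, the index is $n$, which is divisible by $n$.

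\emph{Applying Lemma~\ref{lemme:sens4}.} Feed this $X_H$ through the construction to obtain $X_G$. Nonemptiness of $X_H$ gives nonemptiness of $X_G$, so $X_G$ is a nonempty SFT on $G$. Now suppose for contradiction that $X_G$ contains a strongly periodic point. Lemma~\ref{lemme:sens4}, with this value of $n$, then produces a subgroup of $G$ of finite index divisible by $n$. But that is exactly what the hypothesis of the corollary forbids: $G$ contains no finite-index subgroup of index divisible by $n$. This contradiction shows $X_G$ has no strongly periodic point, hence $X_G$ is a nonempty SFT with no finite-orbit point, i.e.\ a weakly aperiodic SFT on $G$.

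\emph{The main obstacle.} The only delicate point is lining up the hypothesis of Lemma~\ref{lemme:sens4} with the counter SFT — one must check that the $n$ in the divisibility statement can be taken to be the same $n$ appearing in the counter's period, and that ``subgroup of finite index divisible by $n$'' in the lemma's conclusion matches the corollary's phrasing ``subgroup of finite index divisible by $n$.'' Everything else (nonemptiness of $X_G$, the applicability of the transfer theorem via Seward's result) is routine once the $\mathbb{Z}$-gadget is chosen. I expect no serious difficulty beyond correctly instantiating $n$.
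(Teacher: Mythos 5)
Your proposal is correct and follows essentially the same route as the paper: invoke Seward's theorem to get a translation-like action of $\mathbb{Z}$ on $G$, feed the mod-$n$ counter SFT on $\mathbb{Z}$ (every point of which has stabilizer of index exactly $n$) through the main construction, and apply Lemma~\ref{lemme:sens4} to rule out periodic points in $X_G$ via the hypothesis on indices of finite-index subgroups. Your observation that one cannot simply transfer weak aperiodicity (since $\mathbb{Z}$ has none) and must instead control stabilizer indices is exactly the point of the paper's argument.
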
	
\begin{proof}
	$\mathbb{Z}$ acts translation-like on $G$. Now consider 
	\[X_\mathbb{Z} = \{ x \in \{0,1, 2, \dots, n-1\}^\mathbb{Z} | \forall i, 
		  x_{i+1} = x_i + 1 \mod n\}\]
		
$X_\mathbb{Z}$ is a finite subshift where every point is of period exactly $n$.
Now the construction gives a subshift $X_G$. By lemma \ref{lemme:sens4}, $X_G$ has no periodic point.
\end{proof}

In particular $p$-groups are group $G$ where for every element $g\in
G$, $g^{p^k} = \lambda_G$ for some $k$.
As a consequence, $G$ does not contain any group of finite index
divisible by $q$ for any   prime $q > p$:
\begin{cor}
	Infinite f.g. $p$-groups admit aperiodic SFTs.
\end{cor}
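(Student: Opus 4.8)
The plan is to apply the preceding corollary with $n$ taken to be a prime $q > p$, so the entire task reduces to verifying its hypothesis: an infinite finitely generated $p$-group $G$ contains no subgroup of finite index whose index is divisible by $q$. Since the previous corollary already packages the construction (build a weakly aperiodic SFT $X_\mathbb{Z}$ on $\mathbb{Z}$, transport it along the translation-like action of $\mathbb{Z}$ on $G$, and invoke Lemma \ref{lemme:sens4}), no new symbolic dynamics is needed here; everything is elementary group theory.

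First I would establish the key fact that every finite-index subgroup $H \leq G$ has index a power of $p$. Given such an $H$, I would pass to its normal core $K = \bigcap_{g \in G} gHg^{-1}$, which is a normal subgroup of $G$ that is still of finite index (the core of a finite-index subgroup has finite index). The quotient $G/K$ is then a finite group, and since $G$ is a $p$-group, every element of $G$, and hence every element of $G/K$, has order a power of $p$. By Cauchy's theorem a finite group whose cardinality had a prime divisor $q' \neq p$ would contain an element of order $q'$, which is impossible; thus $|G/K| = [G:K]$ is a power of $p$. Finally, from $K \subseteq H \subseteq G$ we get $[G:K] = [G:H]\,[H:K]$, so $[G:H]$ divides the $p$-power $[G:K]$ and is itself a power of $p$.

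With this fact in hand, I would fix any prime $q > p$ (for instance the least prime exceeding $p$). Because the index of every finite-index subgroup of $G$ is a power of $p$, no such index is divisible by $q$; that is, $G$ contains no subgroup of finite index divisible by $q$. As $G$ is infinite and finitely generated, the hypotheses of the previous corollary are met with $n = q$, and we conclude that $G$ admits a weakly aperiodic SFT.

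I do not expect a genuine obstacle: the only content beyond citing the previous corollary is the observation that finite quotients of a $p$-group are again $p$-groups. The single point demanding care is that the previous corollary is phrased for arbitrary (not necessarily normal) finite-index subgroups, which is precisely why the passage through the normal core $K$ is needed; ruling out normal finite-index subgroups of index divisible by $q$ alone would not literally match the statement, whereas the normal-core argument handles both readings at once.
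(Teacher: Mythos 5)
Your proposal is correct and follows exactly the route the paper intends: it reduces to the preceding corollary with $n=q$ for a prime $q>p$, and your normal-core/Cauchy argument is just a careful expansion of the paper's one-line assertion that a $p$-group has no subgroup of finite index divisible by $q>p$. No discrepancy to report.
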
	

This result is a variation of a construction by Marcinkowski and Nowak
\cite{marcin} of an aperiodic set of tiles on a translation surface on
which the Grigorchuk group acts.
Using a translation-like action, we are able to obtain directly an aperiodic SFT on the Cayley graph of $G$.

\subsection{Direct products and the Domino Problem}

We now give a small example of the relevance of the main theorem to
the domino problem.

\begin{lemma}
	If $H_1$ acts translation-like on $G_1$ and  $H_2$ acts
	translation-like on $G_2$, then $H_1 \times H_2$ acts
	translation-like on $G_1 \times G_2$
\end{lemma}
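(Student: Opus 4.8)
The plan is to prove this directly from the definition of translation-like action, constructing the product action componentwise. Since $H_1$ acts translation-like on $G_1$, let $\circ_1$ denote this right action, witnessed by a generating set $S_{H_1}$ and a finite set $S_{G_1} \subseteq G_1$ with a map $\phi_1 \colon G_1 \times S_{H_1} \to S_{G_1}$ such that $g_1 \circ_1 h_1 = g_1 \phi_1(g_1, h_1)$. Similarly let $\circ_2$ be the action of $H_2$ on $G_2$, witnessed by $S_{H_2}$, $S_{G_2}$, and $\phi_2$. I would then define a right action $\star$ of $H_1 \times H_2$ on $G_1 \times G_2$ by letting each factor act on the corresponding coordinate:
\[
(g_1, g_2) \star (h_1, h_2) = (g_1 \circ_1 h_1,\; g_2 \circ_2 h_2).
\]

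The key steps are the following. First I would verify that $\star$ is genuinely a right action of the product group: this reduces to checking that the two coordinate actions commute appropriately and respect the product group multiplication, which is immediate because they operate on disjoint coordinates. Second, I would check freeness: if $(g_1, g_2) \star (h_1, h_2) = (g_1, g_2)$, then $g_1 \circ_1 h_1 = g_1$ and $g_2 \circ_2 h_2 = g_2$, so by freeness of each $\circ_i$ we get $h_1 = \lambda_{H_1}$ and $h_2 = \lambda_{H_2}$, hence $(h_1, h_2)$ is the identity of $H_1 \times H_2$. Third, I would exhibit the required finite generating data. A natural generating set for $H_1 \times H_2$ is $S = (S_{H_1} \times \{\lambda_{H_2}\}) \cup (\{\lambda_{H_1}\} \times S_{H_2})$, and a natural finite set for the product is $S_G = (S_{G_1} \times \{\lambda_{G_2}\}) \cup (\{\lambda_{G_1}\} \times S_{G_2})$. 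I would then define $\phi \colon (G_1 \times G_2) \times S \to S_G$ by $\phi((g_1,g_2),(h_1,\lambda_{H_2})) = (\phi_1(g_1,h_1), \lambda_{G_2})$ and symmetrically for generators of the form $(\lambda_{H_1}, h_2)$, and confirm that $(g_1,g_2) \star s = (g_1,g_2)\,\phi((g_1,g_2),s)$ for each such generator $s$.

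There is no serious obstacle here; the lemma is essentially bookkeeping, as the two actions are independent across the coordinates. The only mild subtlety is being careful that the definition of translation-like action in the paper requires the map $\phi$ only on a \emph{generating} set $S_H$, not on all of $H$, so I must present the generator set of $H_1 \times H_2$ as the union of lifts of the two factors' generators rather than as arbitrary pairs $(h_1, h_2)$; this keeps $S$ finite and makes the labelling $\phi$ land in the finite set $S_G$. Once this is set up, verifying the defining equation $g \star s = g\,\phi(g,s)$ is a one-line computation in each case. Alternatively, one could phrase the whole argument pictorially via Cayley graphs, observing that the labelled subgraph witnessing the product action is simply the product of the two labelled subgraphs, but the explicit componentwise construction above is cleaner to write and verify.
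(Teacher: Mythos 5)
Your proof is correct; the paper actually states this lemma without proof, treating it as immediate, and your componentwise construction (product action, freeness coordinate by coordinate, generating set taken as the union of lifts of the two factors' generators) is exactly the intended argument. Your remark about restricting the labelling map to a generating set of $H_1 \times H_2$ is the right point of care, and it is justified by the paper's observation that the definition of a translation-like action does not depend on the choice of generating set.
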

\begin{cor}
	For any f.g. infinite $G_1, G_2$, $G_1 \times G_2$ has a weakly
	aperiodic SFT and an undecidable domino problem.   
\end{cor}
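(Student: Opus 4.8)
The plan is to reduce the statement to the single case $H = \mathbb{Z}^2$ by chaining together the two structural results immediately preceding the corollary, and then to invoke the main theorem. I would not attempt any direct construction on $G_1 \times G_2$; instead the whole point is to funnel everything through a well-understood finitely presented group that already carries the desired dynamical pathologies.

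First I would produce a translation-like action of $\mathbb{Z}^2$ on $G_1 \times G_2$. Since each $G_i$ is finitely generated and infinite, Seward's theorem guarantees that $\mathbb{Z}$ acts translation-like on $G_i$ for $i = 1, 2$. Applying the product lemma (the lemma stated just above the corollary) with $H_1 = H_2 = \mathbb{Z}$ and the two actions just obtained, I get that $\mathbb{Z} \times \mathbb{Z} = \mathbb{Z}^2$ acts translation-like on $G_1 \times G_2$. This is the only place where both hypotheses ``$G_1$ infinite'' and ``$G_2$ infinite'' are genuinely used: each is needed so that Seward applies to the corresponding factor.

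Next I would close the argument by appealing to the main theorem with $H = \mathbb{Z}^2$ and $G = G_1 \times G_2$. The group $\mathbb{Z}^2 = \left< a, b \mid ab = ba \right>$ is finitely presented, which is exactly the hypothesis the main theorem demands of the acting group. As recalled in the review of previous results, $\mathbb{Z}^2$ carries a weakly aperiodic SFT and has an undecidable domino problem. Hence both ``$H$ has a weakly aperiodic SFT'' and ``$H$ has an undecidable domino problem'' hold, and the two concluding bullet points of the main theorem transfer both properties to $G_1 \times G_2$, which is the claim.

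As for difficulty: at the level of the corollary there is essentially no obstacle, since it is a short composition of three already-established facts (Seward's theorem, the product lemma, and the main theorem). The genuine content lives entirely in those ingredients. The one subtlety that must be respected is the finite-presentability hypothesis of the main theorem: it is crucial that the construction lands on $\mathbb{Z}^2$ rather than on an arbitrary finitely generated infinite group, because the transfer theorem is not available without finite presentability of the acting group. This is precisely why the argument must route through $\mathbb{Z}$ (which is finitely presented) and its square, rather than trying to embed, say, $G_1$ itself translation-like into the product.
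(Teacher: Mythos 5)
Your proposal is correct and is exactly the paper's argument: Seward's theorem gives a translation-like action of $\mathbb{Z}$ on each infinite factor, the product lemma upgrades this to $\mathbb{Z}^2$ acting translation-like on $G_1 \times G_2$, and the main theorem (with Berger's results on $\mathbb{Z}^2$, which is finitely presented) transfers both the weakly aperiodic SFT and the undecidable domino problem. Your added remark about why finite presentability forces the route through $\mathbb{Z}^2$ is a sensible gloss on the same proof.
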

\begin{proof}
Under the hypothesis, $\mathbb{Z} \times \mathbb{Z}$ acts on $G_1
\times G_2$.
By results of Berger \cite{Berger2}, $\mathbb{Z}^2$ has a weakly
aperiodic SFT and an undecidable domino problem.
\end{proof}	
\begin{cor}
	$\mathbb{Z}^2$ acts translation-like on the Grigorchuk group. 
	Therefore the Grigorchuk group has an aperiodic SFT.
\end{cor}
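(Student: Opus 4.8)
The plan is to establish the translation-like action first; the existence of a weakly aperiodic SFT (and in fact an undecidable domino problem) on the Grigorchuk group then follows immediately from the main theorem, since $\mathbb{Z}^2$ is finitely presented and, by Berger \cite{Berger2}, both admits a weakly aperiodic SFT and has an undecidable domino problem. So the whole content is the first sentence: that $\mathbb{Z}^2$ acts translation-like on the Grigorchuk group $\mathcal{G}$.

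Rather than exhibit such an action by hand, I would assemble it from the tools already developed, reducing everything to one structural fact about $\mathcal{G}$. The key input is that $\mathcal{G}$ is a regular branch group. Writing $\psi\colon St_{\mathcal{G}}(1)\to\mathcal{G}\times\mathcal{G}$ for the injective homomorphism induced by the action on the two subtrees of the rooted binary tree, there is a finite-index subgroup $K\leq\mathcal{G}$ (which may be taken to be the normal closure of $(ab)^2$, so that $K\leq St_{\mathcal{G}}(1)$) with $K\times K\leq\psi(K)$. Since $\psi$ is injective, $\psi^{-1}(K\times K)$ is a subgroup of $St_{\mathcal{G}}(1)\leq\mathcal{G}$ isomorphic to $K\times K$. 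Thus $\mathcal{G}$ contains a copy of $K\times K$, where $K$ is infinite and finitely generated, being of finite index in the f.g. group $\mathcal{G}$.

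With this in place the action is produced by chaining the lemmas from the previous sections. As $K$ is infinite and f.g., Seward's theorem gives a translation-like action of $\mathbb{Z}$ on $K$. The product lemma then yields a translation-like action of $\mathbb{Z}\times\mathbb{Z}$ on $K\times K$. Because $K\times K$ is a subgroup of $\mathcal{G}$, subgroup containment gives a translation-like action of $K\times K$ on $\mathcal{G}$, and the transitivity lemma composes these two actions into a translation-like action of $\mathbb{Z}^2=\mathbb{Z}\times\mathbb{Z}$ on $\mathcal{G}$, as required.

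I expect the branch-group fact---that $\mathcal{G}$ contains $K\times K$ for a finite-index subgroup $K$---to be the only genuine obstacle, and it is entirely group-theoretic rather than dynamical; everything else is formal bookkeeping with the three lemmas already proved. The one point needing care is to invoke the regular branch structure of $\mathcal{G}$ correctly and to confirm that the extracted subgroup is infinite and finitely generated, which is automatic once one knows it has finite index in $\mathcal{G}$.
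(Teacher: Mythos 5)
Your proof is correct and follows essentially the same route as the paper: both reduce the statement to exhibiting a direct product of two infinite finitely generated groups inside the Grigorchuk group $\mathcal{G}$, and then chain Seward's theorem, the product lemma, subgroup containment, and transitivity of translation-like actions. The only difference is the provenance of that subgroup: the paper cites Muchnik and Pak for a \emph{finite-index} subgroup of the form $H_1\times H_2$ with both factors infinite (finite generation of the factors then being automatic), whereas you extract a copy of $K\times K$ from the regular branch structure over $K=\langle (ab)^2\rangle^{\mathcal{G}}$; your copy of $K\times K$ has infinite index in $\mathcal{G}$, but that is harmless since $K$ itself has finite index, hence is infinite and finitely generated, which is all the lemmas require.
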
	
\begin{proof}
	$G$ contains a subgroup of finite index of the form $H = H_1 \times H_2$ with
	$H_1, H_2$ infinite~\cite{MuchnikPak}.
	As $H$ is of finite index in $G$, $H$ is finitely generated,
	therefore $H_1$ and $H_2$ are finitely generated as well.	
\end{proof}
Note that the result of Muchnik and Pak \cite{MuchnikPak} was done in
the context of percolation theory: They are interested in the class
$\cal S$ of groups for which there exists $p < 1$ s.t. 
if every vertex of the (undirected) Cayley graph is activated independently with probability $p$, then the
connected component of the identity is infinite a.s. It can be proven that
this property is independent of the generating set so that it is
really a property of the group. The Benjamini-Schramm conjecture states
that $\cal S$ is
exactly the set of all infinite f.g. groups which are not virtually
cyclic, which is the same conclusion as our conjecture.
Note that if the Cayley graph of $G$ contains some copy of some Cayley
graph  of $H$ and $H \in \cal S$, then $G \in \cal S$. In particular
if $H$ acts translation-like on $G$ and $H \in \cal S$ then $G \in \cal S$.
We do not know if there is some link between the existence of a
weakly aperiodic SFT and percolation.

\clearpage
\paragraph{}
We will now obtain a better result, that corresponds to our version of
the theorem of Ballier and Stein \cite{BallierStein}

\begin{lemma}
Let $\psi: G \mapsto G'$ a onto morphism s.t. $\mathbb{Z}$ acts
translation-like on $G'$.
Then there exists a transversal $K$ (i.e. $\psi$ is injective on
$K$ and $\psi(K) = G'$) s.t $\mathbb{Z}$ acts translation-like on $K$.
\end{lemma}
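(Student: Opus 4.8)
The plan is to lift the translation-like action of $\mathbb{Z}$ on $G'$ to a transversal of $N = \ker\psi$, taking care to lift whole orbits coherently rather than choosing coset representatives one element at a time. Let $\circ$ denote the action of $\mathbb{Z}$ on $G'$, witnessed by a finite set $S_{G'} \subseteq G'$: the action is free and $g' \circ 1 = g'\phi(g',1)$ with $\phi(g',1) \in S_{G'}$, so the orbits are precisely the copies of the Cayley graph of $\mathbb{Z}$ (biinfinite lines) into which $G'$ is partitioned.

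First I would fix, once and for all, a preimage $\tilde s \in G$ with $\psi(\tilde s) = s$ for each generator $s \in S_{G'}$, chosen compatibly with inverses ($\widetilde{s^{-1}} = \tilde s^{-1}$); since $S_{G'}$ is finite, the set $\tilde S = \{\tilde s : s \in S_{G'}\}$ is a \emph{finite} subset of $G$. Then, for each orbit $O = \{g'_i\}_{i\in\mathbb{Z}}$ of the action (with $g'_{i+1} = g'_i \circ 1 = g'_i s_i$, $s_i \in S_{G'}$), I pick a base point $g'_0$ together with an arbitrary preimage $k_0 \in \psi^{-1}(g'_0)$, and I lift the entire line by the recurrence $k_{i+1} = k_i \tilde s_i$ for all $i \in \mathbb{Z}$. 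By induction $\psi(k_i) = g'_i$, so the $k_i$ are pairwise distinct (the $g'_i$ are, because the action is free), and the consecutive steps satisfy $k_i^{-1}k_{i+1} = \tilde s_i \in \tilde S$.

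Now I set $K$ to be the union of these lifted lines over all orbits. Since the orbits partition $G'$ and $\psi$ sends the lift of each orbit bijectively onto that orbit, $\psi|_K$ is a bijection onto $G'$, i.e. $K$ is a transversal. Transporting $\circ$ through this bijection yields a free action of $\mathbb{Z}$ on $K$ (freeness is inherited from $G'$ since $\psi|_K$ intertwines the two actions: $k \circ n = k$ would force $\psi(k) \circ n = \psi(k)$), and by construction $k \circ 1 = k\,\tilde s_i$ while $k \circ (-1) = k\,\tilde s_{i-1}^{-1}$, both lying in the finite set $S_K = \tilde S \cup \tilde S^{-1}$. This is exactly the data witnessing that $\mathbb{Z}$ acts translation-like on $K$, with the moves taken in $G$ from the finite set $S_K$.

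The only genuine obstacle is keeping the step set $S_K$ finite: if one chose the representatives of the cosets of $N$ independently, the differences $k^{-1}(k\circ 1)$ would range over the whole coset $\tilde s_i N$ and could be unbounded in $G$. Lifting each orbit as a single coherent path built from the fixed finite family $\tilde S$ of preimages is precisely what confines the successor of each $k$ to within bounded distance in the Cayley graph of $G$; everything else — well-definedness and freeness of the induced action, and the transversal property — is a routine consequence of $\psi|_K$ being an action-intertwining bijection.
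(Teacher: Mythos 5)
Your proof is correct and follows essentially the same route as the paper: both arguments fix a finite set of $\psi$-preimages of the generators $S_{G'}$, lift one base point per orbit, and then lift each $\mathbb{Z}$-orbit as a coherent path built from those fixed preimages, so that the induced action on the resulting transversal moves each point by an element of the finite set $\tilde S \cup \tilde S^{-1}$. The observation you highlight --- that choosing coset representatives independently would destroy finiteness of the step set --- is exactly the point of the paper's construction via the words $w^r$ and the lift $\theta$.
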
	
(Note that the definition of a translation-like action can be defined
even if $K$ is not a group)
\begin{proof}
First, $\mathbb{Z}$ acts translation-like on $G'$. Let $S_{G'}$ be 
the finite set that witnesses it when the generators of $\mathbb{Z}$
are chosen to be $-1$ and $+1$. Write $S_{G'} = \{ a_1 \dots a_n\}$.

Let $R \subseteq G'$ be a set of representations of the free action,
that is every element of $G'$ can be uniquely written $g = r \circ h$
for some $h \in \mathbb{Z}$ and $r \in R$.

Now we fix some notation.
A \emph{biinfinite} sequence in a group $H$ is  a map 
$w : \mathbb{Z}^\times \rightarrow H$ where $\mathbb{Z}^\times$ is the
set of nonzero integers.
For such a sequence, we write
$w(0) = \lambda$,  $w(n) = w_1 \dots w_n$ for $n \geq 0$ and
$w(n) = w_{-1} w_{-2} \dots w_n$ for $n < 0$, 

Now let $r\in R$.
As $r \circ n = r \circ 1 \circ 1 \circ \dots \circ 1$ for $n \geq 0$
(and similarly for $n < 0$), there exists an infinite sequence
$w^{r}$ with values in $S_{G'}$ s.t.
$r \circ n = r w^{r}(n)$.
Basically, an element of $r \circ \mathbb{Z}$  is represented by the label
on the path from $r$ to this element following the copy of
the Cayley graph of $\mathbb{Z}$. Note that there is a unique path
from $r$ to this element, as the Cayley graph of $\mathbb{Z}$ is a tree.

Now we take representatives of all elements of $S_{G'}$ and all elements
of $R$: Let $\theta: R \cup S_{G'} \rightarrow G$ s.t. $\psi\theta(g)
= g$ whenever it is defined.

Now we take $K = \{ \theta(r) \theta(w^r)(n) | r \in R, n \in
  \mathbb{Z}\}$, where $\theta(w^r)_i = \theta(w^r_i)$

It is easy to see that  $K$ is a transversal and that all elements in
the above formula are distinct.

Furthermore $\mathbb{Z}$ acts on $K$ by 
$\theta(r) \theta(w^r(n)) \circ m = \theta(r) \theta(w^r(n+m))$.
This action is clearly free.

Furthermore, for all $k \in K$, $k \circ 1 = kb$ for some $b \in
\theta(S_{G'}) \cup \theta(S_{G'})^{-1}$, and the same is true for $-1$, so
that $\theta(S_{G'}) \cup \theta(S_{G'})^{-1}$ witnesses the fact that
$\mathbb{Z}$ acts translation-like on $K$.
\end{proof}	
The same could be done replacing $\mathbb{Z}$ by any group which
admits a Cayley graph which is a tree, i.e. a free group.
The result is false in general if $\mathbb{Z}$ is replaced by a group
which is not free, take for example the morphism from the free group
to $\mathbb{Z}^2$.

\begin{cor}
If $H \subseteq Z(G)$ is finitely generated and $G/H$ is f.g. infinite, 
then $\mathbb{Z} \times H$ acts translation-like on $G$.
\end{cor}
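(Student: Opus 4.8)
The plan is to feed the quotient map into the preceding lemma and then glue on $H$ by hand. Since $H \subseteq Z(G)$ it is normal, so $\psi\colon G \to G/H$ is an onto morphism onto the group $G' := G/H$, which is f.g.\ and infinite by hypothesis. By the theorem of Seward, $\mathbb{Z}$ acts translation-like on $G'$, so the preceding lemma yields a transversal $K \subseteq G$ of the cosets of $H$ on which $\mathbb{Z}$ acts translation-like; call this action $\circ$, and let $B \subseteq G$ be the finite witnessing set, so that $k \circ 1 = kb$ and $k \circ(-1) = kb'$ with $b,b' \in B$ for every $k \in K$. Because $K$ is a transversal and $H$ is normal, every $g \in G$ has a unique expression $g = kh$ with $k \in K$ and $h \in H$; I will define the required action in terms of this decomposition.

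Next I would define a right action of $\mathbb{Z} \times H$ on $G$ by $(kh) \star (m,h') = (k \circ m)(hh')$. Since $k \circ m \in K$ and $hh' \in H$, the right-hand side is already written in its canonical form, so $\star$ is well defined. That $\star$ is an action is a routine check using that $\circ$ is an action of $\mathbb{Z}$ and that $H$ is abelian; freeness follows from freeness of $\circ$ together with uniqueness of the decomposition $g = kh$ (the equation $(k\circ m)(hh') = kh$ forces $k \circ m = k$ and $hh' = h$, hence $m = 0$ and $h' = \lambda_H$).

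The crucial step is to verify bounded displacement on the generating set $\{(\pm 1, \lambda_H)\} \cup (\{0\} \times S_H)$ of $\mathbb{Z} \times H$, where $S_H$ is a finite generating set of $H$. For an $H$-generator, $(kh) \star (0,h_i) = k(hh_i) = g h_i$, a right translation by the fixed element $h_i$. For the positive $\mathbb{Z}$-generator, $(kh) \star (1,\lambda_H) = (k \circ 1)h = (kb)h$, and here centrality of $H$ is essential: since $h$ commutes with $b$, $(kb)h = k(hb) = (kh)b = gb$, a right translation by $b \in B$ (and symmetrically for $-1$). Thus every generator moves $g$ by right multiplication by an element of the finite set $B \cup S_H$, which is exactly the translation-like condition. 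I expect this last point to be the only real obstacle: without centrality the displacement would be the conjugate $h^{-1}bh$, which ranges over an infinite set as $h$ varies and so would break finiteness of the witnessing set; centrality is precisely what keeps the $\mathbb{Z}$-motion a uniform right translation in $G$.
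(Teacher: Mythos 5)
Your proposal is correct and follows essentially the same route as the paper: invoke Seward's theorem on $G/H$, apply the preceding lemma to the quotient map to get a transversal $K$ with a translation-like $\mathbb{Z}$-action, and define $(kh)\star(m,h') = (k\circ m)(hh')$, which is exactly the paper's formula. The paper leaves the verification as ``easy to see''; your explicit check that centrality of $H$ is what keeps the $\mathbb{Z}$-displacement a uniform right translation (rather than a conjugate $h^{-1}bh$) is the right detail to supply.
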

\begin{proof}
Use the previous lemma and define $k a \circ (n,b) = (k \circ n) (ab)$
for $k \in K, a,b \in H$ and $n \in \mathbb{Z}$.
It is easy to see that it is indeed an action, and it is translation-like
\end{proof}

\begin{cor}[\cite{BallierStein}]
	If $Z(G)$ is f.g.,  $\mathbb{Z} \subseteq Z(G)$ and $G/Z(G)$ is
	f.g. infinite, then $G$ has a weakly aperiodic SFT and an undecidable domino problem
\end{cor}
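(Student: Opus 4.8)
The plan is to reduce everything to $\mathbb{Z}^2$, whose relevant properties are already known, and then transport them to $G$ along a translation-like action using the main theorem. Concretely, I would manufacture a translation-like action of $\mathbb{Z}^2$ on $G$ by a single application of the corollary just proved.

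First I would set $H = \mathbb{Z}$, taking it to be the central infinite cyclic subgroup supplied by the hypothesis $\mathbb{Z} \subseteq Z(G)$. This $H$ is finitely generated, and $G/H$ is finitely generated because $G$ is. The point that needs a moment's thought is that $G/H$ is \emph{infinite}: since $H \subseteq Z(G)$, there is a natural surjection $G/H \twoheadrightarrow G/Z(G)$, and the target is infinite by hypothesis, so $G/H$ is infinite as well.

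Next I would invoke the preceding corollary with this $H$. Its hypotheses, namely that $H \subseteq Z(G)$ is finitely generated and that $G/H$ is finitely generated and infinite, are exactly what was just checked, so it yields that $\mathbb{Z} \times H = \mathbb{Z} \times \mathbb{Z} = \mathbb{Z}^2$ acts translation-like on $G$.

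Finally, $\mathbb{Z}^2 = \langle a,b \mid ab = ba\rangle$ is finitely presented, and by Berger's theorem \cite{Berger2} it admits a weakly aperiodic SFT and has an undecidable domino problem. As $\mathbb{Z}^2$ is finitely presented and acts translation-like on $G$, the main theorem transports both conclusions to $G$, which finishes the argument. There is no genuine obstacle here, since the substance lies entirely in the main theorem and the previous corollary; the only delicate points are the verification that $G/\mathbb{Z}$ is infinite (which uses centrality) and the choice of subgroup fed into the previous corollary. Taking $H = \mathbb{Z}$ rather than $H = Z(G)$ makes finite presentation of the acting group immediate and shows that the hypothesis ``$Z(G)$ finitely generated'' is not strictly needed for this route; alternatively one could take $H = Z(G)$, obtaining a translation-like action of the finitely generated abelian, hence finitely presented, group $\mathbb{Z} \times Z(G)$, which contains $\mathbb{Z}^2$ and therefore has a weakly aperiodic SFT by the earlier Proposition.
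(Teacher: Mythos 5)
Your proof is correct and follows essentially the route the paper intends: the corollary is stated without proof precisely because it is an immediate application of the preceding corollary (producing a translation-like action of $\mathbb{Z}\times H$ on $G$) combined with the main theorem and Berger's results on $\mathbb{Z}^2$. Your choice of $H=\mathbb{Z}$ rather than $H=Z(G)$ is a minor and arguably cleaner variant, and your observation that it makes the hypothesis ``$Z(G)$ f.g.'' superfluous is accurate.
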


\begin{cor}
	If a f.g. group $G$ admits $\mathbb{Z}$ as a normal subgroup, then it
	has undecidable domino problem unless it is virtually cyclic.
	
	More precisely, some Baumslag Solitar group
	$B(1,n)$, $n \not= 0$, acts translation-like on $G$.   
\end{cor}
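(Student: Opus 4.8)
The plan is to reduce to the central case already treated and then transport the resulting action along a finite-index inclusion. Write $\mathbb{Z} = \langle a \rangle$ for the normal subgroup in question and set $G' = G/\langle a \rangle$. If $G'$ is finite then $\langle a \rangle$ has finite index in $G$, so $G$ is virtually cyclic and there is nothing to prove; hence I assume from now on that $G'$ is f.g. infinite.

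Since $\langle a \rangle \cong \mathbb{Z}$ is normal, conjugation induces a homomorphism $G \to \mathrm{Aut}(\mathbb{Z}) = \{\pm 1\}$, say $g a g^{-1} = a^{\epsilon(g)}$, and $\epsilon$ is trivial on $\langle a \rangle$ itself. Let $G_0 = \ker \epsilon$, a subgroup of index at most two, hence finitely generated. By construction $a$ is central in $G_0$, so $\langle a \rangle \subseteq Z(G_0)$, while $G_0/\langle a \rangle$ is a finite-index subgroup of the infinite group $G'$ and is therefore again f.g. infinite. This places $G_0$ exactly in the hypotheses of the corollary on finitely generated central subgroups, applied with the central $\mathbb{Z}$ taken to be $\langle a \rangle$, so that $\mathbb{Z} \times \mathbb{Z} = \mathbb{Z}^2$ acts translation-like on $G_0$.

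It remains to push this action from $G_0$ up to $G$. As $G_0$ is a subgroup of $G$, it acts translation-like on $G$ (subgroup containment being a special case of a translation-like action), and a composition of translation-like actions is again translation-like by the transitivity lemma; hence $\mathbb{Z}^2$ acts translation-like on $G$. Recalling that $\mathbb{Z}^2 = B(1,1)$ is a finitely presented Baumslag--Solitar group with undecidable domino problem (Berger), the main theorem immediately yields that $G$ has an undecidable domino problem, and the witnessing group $B(1,1)$ establishes the ``more precisely'' clause with $n = 1$.

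I expect the only genuine verification to be the bookkeeping around $\epsilon$: that conjugation on $\langle a \rangle$ is a well-defined homomorphism into $\{\pm 1\}$ trivial on $\langle a \rangle$ (so that $a$ becomes central in $G_0$), and that passing to the index-at-most-two subgroup preserves both finite generation and the infiniteness of the quotient by $\langle a \rangle$; everything else is assembly of results already in hand. One could instead attempt to build a $B(1,-1)$ (Klein bottle) action directly in the non-central case, but the sign $\epsilon$ need not be constant along the horizontal $\mathbb{Z}$-lines produced by Seward's theorem, which makes a uniform $B(1,n)$ awkward to define by hand; the finite-index reduction sidesteps this difficulty and delivers the cleaner $\mathbb{Z}^2 = B(1,1)$.
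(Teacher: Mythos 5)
Your proof is correct and follows essentially the same route as the paper: pass to the index-at-most-two subgroup on which $a$ is central, apply the central-subgroup corollary to get a translation-like action of $\mathbb{Z}\times\langle a\rangle\simeq\mathbb{Z}^2=B(1,1)$, and transport it to $G$ via subgroup containment and transitivity. The only divergence is that the paper also treats a case $tat^{-1}=a^n$ with $n\notin\{1,-1\}$ (yielding a $B(1,n)$ subgroup), which, as your homomorphism $\epsilon$ into $\mathrm{Aut}(\mathbb{Z})=\{\pm1\}$ makes explicit, cannot actually occur for a normal subgroup isomorphic to $\mathbb{Z}$.
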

\begin{proof}
Let $H \subseteq G$ with $H \simeq \mathbb{Z}$ and let $a$ be
the generator of $H$.
As $H$ is normal, for any element $t$ of $G$, we have $tat^{-1} = a^n$
for some $n$ that depends on $t$.

If $tat^{-1} = a^n$ for some $n \not\in \{1,-1\}$, then the subgroup
generated by $t$ and $a$ is isomorphic to the Baumslag Solitar group
$B(1,n)$ and therefore $G$ admits a subgroup with undecidable domino
problem and therefore has undecidable domino problem.

Otherwise $tat^{-1} = a^{\pm 1}$ for all $t \in G$.
Then $G' = \{ t | tat^{-1} = a\}$ is a subgroup of $G$ that contains
$H$ of index at most $2$ in $G$.
$H$ is in the center of $G'$, so that $\mathbb{Z}  \times H \simeq
\mathbb{Z}^2$ acts translation-like on $G'$, therefore it acts
translation like on $G$.
\end{proof}

\iffalse
\subsection{The Geometric Gersten Conjecture}

Up to a geometrical version of a conjecture of Gersten proposed by
Whyte, we now have a complete characterization of f.g. groups having
a weakly aperiodic SFT.

\begin{conj}[Geometric Gersten Conjecture \cite{Seward}]
	If $G$ is a f.g. group which is not word-hyperbolic, then some
	Baumslag-Solitar group acts on $G$.
\end{conj}	
This is the form of the conjecture proposed by Seward, Whyte adds
additional properties to $G$.

We do not recall here the definition of a word-hyperbolic group. 
The only thing important to note is that a f.g. hyperbolic group
either contains the free group $\mathbb{F}_2$ or is virtually cyclic \cite[Th.
37, p. 157]{Ghys}.

\begin{cor}
	If the Geometric Gersten Conjecture is true, then a f.g. group
	admits a weakly aperiodic SFT iff it it not virtually cyclic.
\end{cor}	
\begin{proof}
As Baumslag-Solitar groups admits aperiodic SFTs \cite{AuKa}, the
corollary fllows from our main theorem if $G$ is not hyperbolic.

If $G$ is hyperbolic then it either contains a free group, in which
case it admits a weakly aperiodic SFT \cite{Piantadosi}, or it is
virtually cyclic.
\end{proof}

\fi

\section{Further generalizations}

First, let remark that the whole construction works as well starting
from a finitely presented monoid rather than a finitely presented group.
In fact, the result of Ballier and Stein \cite{BallierStein} build a $\mathbb{Z}\times
\mathbb{N}$-action on any f.g nilpotent group which is not virtually
$\mathbb{Z}$.
We do not go into details on this generalization as some details are
quite cumbersome. In particular a translation-like action by
$\mathbb{N}$ does not partition the Cayley graph into copies of
$\mathbb{N}$ but into some kind of trees.

\paragraph{}
Note that all our proofs exploit the aperiodicity of $X_H$ to prove
that $X_G$ is aperiodic. There is another way to do this, by forcing
the copies of $H$ inside $G$ to be infinite.

For example, for a finite subset ${\cal T}$ of $S^\star \times S^\star$, let's consider the
following variant of $X$

\[
	X  = \left\{ x \in F^G | \forall g \in G,
	  \left.\begin{array}{l}
			\forall (h,h') \in {\cal R},
	  \Phi(g, x, h) = \Phi(g,x,h') \\
	  \forall (h,h') \in {\cal T},
	  \Phi(g,x,h) \not= \Phi(g,x,h')
  \end{array}
\right.\right \}
     \]

\emph{Isolated} groups \cite{Cornulier} (also called groups with
finite absolute presentation) admit a presentation in this term: If
$G$ is an isolated group with generators $S$, there exists a finite set of relations $\cal R$
and antirelations $\cal T$  s.t. $G$ is the only f.g. group generated by
$S$ that satisfy these relations and antirelations.
As a consequence, any copy of $H$ (that is $\{\Phi(g,x,h), h \in H\}$
for some $g$) should be in bijection with  $H$, and therefore infinite.

We do not give more details as all these (infinite) groups are not
residually finite, therefore this result is subsumed by the
construction of an aperiodic SFT on any nonresidually finite group.

\paragraph{}
An other interesting direction is the undecidability of the \emph{periodic
domino problem}: decide, given a SFT $X$ over a group $G$, if $X$ is
weakly aperiodic. Most of our work here fail for this problem: it is
not true that  $X_G$ is weakly aperiodic iff $X_H$ is.

\section{Open Problems}

Carroll and Penland \cite{Carroll} showed that the existence of an aperiodic
SFT is also closed under another notion of containment:
If $H$ is a finitely generated normal subgroup of $G$ and $G/H$ has a
weakly aperiodic SFT (resp. has an undecidable domino problem), then $G$ does.
Finite generation of $H$ is essential here: $\mathbb{F}_2$ contains
$\mathbb{Z}^2$ as a quotient, but $\mathbb{Z}^2$ has an undecidable
word problem and $\mathbb{F}_2$ does not.

Can this be generalized as well by asking for less than a quotient ?

To tackle the two conjectures stated above, we introduce now a new
conjecture:

\begin{conj}
If $G$ is a nontrivial f.g. group, then some nontrivial one-relator
group acts translation-like on $G$.
\end{conj}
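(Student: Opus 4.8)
The plan is to reduce the conjecture, through the transitivity of translation-like actions together with the base cases of Seward and Whyte, to a case analysis on the large-scale geometry of $G$, the word-hyperbolic/non-hyperbolic dichotomy being the main dividing line. First I would record that the literal statement is already within reach of known results: if $G$ is finite it contains an element of prime order, hence a subgroup $\mathbb{Z}/p = \langle a \mid a^p\rangle$, and since subgroup containment is a special case of a translation-like action and $\mathbb{Z}/p$ is a nontrivial one-relator group, we are done; if $G$ is infinite, Seward's theorem \cite{Seward} gives a translation-like action of $\mathbb{Z} = \langle a, b \mid b\rangle$, itself a one-relator group. So the interest of the conjecture must lie in the stronger statement that the witnessing one-relator group be \emph{not virtually free} whenever $G$ is not, since this is exactly what lets the main theorem transport an undecidable domino problem (or a weakly aperiodic SFT) from the witness to $G$, thereby bearing on the two domino conjectures above. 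I would therefore aim at this stronger form.

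For the stronger form I would split on word-hyperbolicity. If $G$ is not word-hyperbolic, the natural engine is the geometric Gersten conjecture in the form stated by Seward \cite{Seward}: some Baumslag-Solitar group $B(m,n)$ acts translation-like on $G$. Each $B(m,n) = \langle a, b \mid a b^m a^{-1} = b^n\rangle$ is a one-relator group, and (for $mn \neq 0$) it is not virtually free, as it admits a weakly aperiodic SFT and an undecidable domino problem \cite{AuKa}; this would settle the non-hyperbolic case. If $G$ is word-hyperbolic I would invoke the classical dichotomy: either $G$ is virtually cyclic, where $\mathbb{Z}$ via \cite{Seward} suffices, or $G$ is non-elementary, hence non-amenable, and Whyte's theorem \cite{Whyte} supplies a translation-like action of the one-relator group $\mathbb{F}_2 = \langle a,b,c \mid c\rangle$. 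Throughout, the composition (transitivity) lemma for translation-like actions would be used to chain an action found on an intermediate group up to $G$.

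The main obstacle is the non-hyperbolic case itself: it rests on the geometric Gersten conjecture, which is open, so this route can only prove the present conjecture conditionally. A second, genuinely new difficulty is the hyperbolic but non-virtually-free case — for instance a surface group or a torsion-free one-relator hyperbolic group. There the dichotomy delivers only $\mathbb{F}_2$, which is virtually free, whereas the motivating application wants a non-virtually-free one-relator witness; one would instead have to exhibit a translation-like action by the relevant surface or one-relator group directly, and I do not see a uniform construction covering all such $G$. I expect that making the Gersten step unconditional and treating non-virtually-free hyperbolic groups is precisely the content that keeps the statement a conjecture rather than a theorem.
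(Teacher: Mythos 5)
This statement is a \emph{conjecture}: the paper offers no proof of it, and in fact closes by proposing the lamplighter group as a potential counterexample to it. So there is no ``paper proof'' to match, and your attempt --- which you candidly present as conditional and incomplete --- cannot be counted as a proof either. The decisive external input you lean on, the geometric Gersten conjecture, is itself open (the paper contains exactly this reduction in a commented-out section and chose not to include it), so the non-hyperbolic branch of your case analysis rests on an unproved statement at least as hard as the one you are trying to establish.

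There is also a misreading that distorts your analysis of which cases are hard. The paper states immediately after the conjecture: ``Here nontrivial means not virtually cyclic.'' So the hypothesis already excludes finite and virtually cyclic $G$ (your $\mathbb{Z}/p$ and Seward-$\mathbb{Z}$ base cases are beside the point, since neither $\mathbb{Z}/p$ nor $\mathbb{Z}$ is a \emph{nontrivial} one-relator group in this sense), and the required witness must be non-virtually-cyclic, not non-virtually-free as you guessed. Under the correct reading, your worry about hyperbolic non-virtually-free groups dissolves: a non-elementary hyperbolic group is nonamenable, so Whyte's theorem already supplies $\mathbb{F}_2$, which is not virtually cyclic. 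The genuinely open case, as the paper itself indicates by restating the conjecture, is the \emph{amenable} one: since a translation-like action of $H$ on an amenable $G$ forces $H$ amenable (via Whyte and transitivity), and the amenable non-virtually-cyclic one-relator groups are exactly the groups $B(1,n)$, the conjecture reduces to showing that some $B(1,n)$ acts translation-like on every non-virtually-cyclic amenable f.g.\ group --- and the lamplighter group is precisely where this is unclear. Your hyperbolic/non-hyperbolic dichotomy does not isolate this difficulty; the amenable/nonamenable dichotomy does.
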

Here nontrivial means not virtually cyclic.
Note that the amenable nontrivial one-relator groups are the Baumslag
Solitar groups, so that this conjecture may be restated as: If $G$ is
a nontrivial amenable group, then some Baumslag-Solitar group acts
translation-like on $G$.

This conjecture would imply that every nontrivial group admits a
weakly aperiodic SFT, and the Benjamini-Schramm conjecture stated above.

We give the lamplighter group as a potential counterexample to the 
three conjectures of this paper.

\end{document}